\theoremstyle{definition}
\newtheorem{theorem}{Theorem}[section]
\newtheorem{proposition}[theorem]{Proposition}
\newtheorem{remark}[theorem]{Remark}
\def\@seccntformat#1{\@ifundefined{#1@cntformat}%
	{\csname the#1\endcsname\quad}
	{\csname #1@cntformat\endcsname}
}
\newif\ifShowComments
\def\strutdepth{\dp\strutbox}
\def\druk#1{\strut\vadjust{\kern-\strutdepth
        {\vtop to \strutdepth{%
                \baselineskip\strutdepth\vss
                        \llap{\hbox{#1}\quad}\null}}}}
\title{\bf
A novel unit-asymmetric 
distribution based on correlated Fréchet random variables
}
\author{
\text{Roberto Vila}$^{1}$\thanks{Corresponding author: Roberto Vila, email: {rovig161@gmail.com}
\newline
}\,\,\,and
\text{Felipe Quintino}$^{1}$\\
{\small $^{1}$ Department of Statistics, University of Brasilia, Brasilia, Brazil}
}
\date{}                     
\begin{document}
	\maketitle 	
	\begin{abstract}
In this paper, we propose a new distribution with unitary support which can be characterized as a ratio of the type $W=X_1/(X_1+X_2)$, where $(X_1, X_2)^\top$ follows a bivariate extreme distribution with Fréchet margins, that is, $X_1$ and $X_2$ are two correlated Fréchet random variables. Some mathematical properties such as identifiability, symmetry, stochastic representation, characterization as a ratio, moments, stress-strength probability, quantiles, and the maximum likelihood method are rigorously analyzed. Two applications of the ratio distribution are discussed.	\end{abstract}
	\smallskip
	\noindent
	{\small {\bfseries Keywords.} {Bivariate extreme distribution $\cdot$ Unit-Fréchet distribution $\cdot$ Monte Carlo simulation $\cdot$ \verb+R+ software.}}
	\\
	{\small{\bfseries Mathematics Subject Classification (2010).} {MSC 60E05 $\cdot$ MSC 62Exx $\cdot$ MSC 62Fxx.}}
	
	
	\section{Introduction}
    
	\noindent

Probability models capable of handling data with unit support have garnered considerable attention in statistical research in recent years. The limitations on the support of real-world data naturally arise depending on the phenomena studied. One method to develop new unit models is by applying suitable transformations to already-known models. For example, $Z=X/(X+1)$, $Z=\exp(-X)$ and $Z=X/(X+Y)$ have the support $(0, 1)$, when $X$ and $Y$ are positive variables.
Applications of unit models are found in several areas.
\cite{Ghitanya19} proposed the unit-inverse Gaussian distribution and applied this new model to the Municipal Human Development Index of some Brazilian cities. 
\cite{Bourguignon24} reparameterized the three-parameter Generalized Beta distribution and proposed a new parametric quantile model to analyze a real COVID-19 dataset from Chile.
\cite{Martínez-Flórez24} introduced the unit-bimodal Birnbaum-Saunders distribution. They modeled the distribution of the body fat percentage of Australian athletes competing in different sports.
\cite{Vila24} proposed a class of unit‑log‑symmetric models to analyzing an internet access data.

Recently introduced unit models also include the power unit inverse Lindley distribution \citep{Gemeay24}, the unit extended exponential distribution \citep{Ragab24}, the log-cosine-power unit distribution \citep{Nasiru24}, the Gumbel–Logistic unit distribution \citep{Stojanović24}, and the bimodal beta distribution \citep{Vila24bimodalrates}.
The literature has widely studied models based on Birnbaum-Saunders (BS) distribution, and unit-BS models were also proposed \citep[cf.][]{Mazucheli18, Martínez-Flórez24, Vila24unitbs}. 

New regression models for data on the unit interval have garnered significant attention from researchers in the past few years  \citep[cf.][]{Benites22, Hussein23, Karakaya24, Mazucheli21quantilereg, Santoro24, Stojanović2024Laplace}.
Another important class of models to highlight is the multivariate models for unit cylinder data, where the dependency between random variables can be formally described \citep[cf.][]{Martínez-Flórez22, Vila24bivariate, Vila24multivariate}.

As previously mentioned, models for unit data can naturally be constructed from ratios of random variables. Along these lines, \cite{Bekker09} proposed the type I distribution of the ratio of independent Weibullized generalized beta-prime variables, while 
\cite{NadarajahKotz06} studied the ratio of independent Fréchet random variables.

In this work, we introduce a new class of unit models that arises from a ratio of two correlated Fréchet random variables. 
Recall that the Fréchet distribution is an extreme value model used for modeling extreme events related to maximums of independent random samples.
Applications of this distribution are found in finance, natural catastrophes, and equipment failures, among others.
The books \cite{Elshamy1992, EmbrechtsKluppelbergMikosch13, Galambos78, HaanFerreira06, Resnick08} provide extensive coverage for a detailed study of extreme value theory.

In bivariate data modeling, it is common to focus on comparing both random variables. For instance, in Finance, we may compare two assets \citep[cf.][]{Quintino23}, while in Engineering, we might examine the relationship between a material's strength and the stress it is subjected to \citep[cf.][]{KunduRaqab09}.
Several methods can be used to make these comparisons. One such method is based on stress-strength reliability (SSR). Given two random variables, $X$ and $Y$, we can compare them by analyzing the probability that $\mathbb{P}(X < Y)$ is greater or less than $1/2$. \cite{Nadarajah03} and \cite{AbbasTang14} studied SSR for independent Fréchet random variables $X$ and $Y$.
An alternative approach is to analyze the ratio 
$X/Y$ directly. In this context, \cite{NadarajahKotz06} examined the ratio of independent Fréchet random variables. They derived the cumulative distribution function (CDF) and the percentiles of the ratio using special functions such as the error function and the generalized hypergeometric function.

It also is possible to compare $X$ and $Y$ when both are dependent random variables. In this context, we can model the dependence between the random variables via a copula approach. Several applications of bivariate copula models for extreme value distributions are presented in the Literature \citep[see, for example,][and the references therein]{Lima24}. 
Furthermore, asymptotic constructions, characterizations, and various properties of multivariate extreme value distributions are extensively discussed in \cite{Elshamy1992} and \cite{Resnick08}.
In particular, a class of bivariate distributions with Fréchet marginals presented in \cite{Elshamy1992} served as the basis for constructing the new distribution with unit support proposed in this work. A similar approach could be applied to the bivariate classes with Fréchet marginals presented in \cite{Resnick08}.

This paper aims to propose a new class of distributions with unit support. Our construction generalizes the ratio model proposed by \cite{NadarajahKotz06}, which was based on the ratio of independent Fréchet random variables. A Monte Carlo simulation study is presented to validate the maximum likelihood estimator of the new distribution. Finally, applications to real-world datasets are discussed.

The remainder of this paper is organized as follows: Section \ref{sec_preliminaries} provides an introduction to the preliminaries, including the definition of the univariate Fréchet distribution and a bivariate distribution with Fréchet marginals. Section \ref{sec_main} introduces the new model, referred to as the unit-Fréchet distribution.
Section \ref{sec_properties}, on the other hand, deals with the
derivation of the new unit distribution and its mathematical properties such as identifiability, symmetry, stochastic representation, characterization as a ratio, moments, stress-strength probability, quantiles, and the maximum likelihood estimation for the parameters.
In Section \ref{sec_simulations}, we discuss Monte Carlo simulations to evaluate the proposed estimator, while Section \ref{sec_applications} addresses the modeling of two real-world scenarios involving football data and income-consumption data set. The final section presents our conclusions.

\section{Preliminaries}\label{sec_preliminaries}
In this section, we present some definitions which will be used subsequently.

\subsection{The univariate Fréchet distribution}
Let $X_1, X_2, \cdots$ be a sequence of independent and identically distributed random variables with a common distribution $G(\cdot)$. 
Consider the partial maximum $X_{(n)} = \max \{X_j;~ j=1,\cdots, n\}$.
The derivation of extreme distributions (asymptotic distribution functions for $X_{(n)}$) is based on the asymptotic approximation of the normalized partial maximum.
More specifically,
\begin{equation}\label{eq_partial_maximum}
    \frac{X_{(n)}-b_n}{a_n}\overset{d}{\rightarrow}F(x),~~\mbox{as}~~n\rightarrow\infty,
\end{equation}
where $a_n>0$, $b_n\in\mathbb{R}$, $F(\cdot)$ is a nondegenerate distribution function and $\overset{d}{\rightarrow}$ denotes convergence in distribution.

Fisher and Tippet Theorem \citep[cf.][]{Resnick08} guarantees us that there are three extreme values distributions $F(\cdot)$ satisfying (\ref{eq_partial_maximum}), the Fréchet, reversed Weibull, and Gumbel distributions.
The conditions on the distribution function $G(\cdot)$ for (\ref{eq_partial_maximum}) to hold true are well-documented and can be referenced in \cite{Resnick08, Galambos78}.
Applications of these distributions are found in finance, natural catastrophes, and equipment failures, among others  \cite{EmbrechtsKluppelbergMikosch13}.

A random variable $X$ has Fréchet distribution with location parameter $\mu\in\mathbb{R}$, scale parameter $\sigma>0$, and shape parameter $\alpha>0$, denoted by $X\sim$ Fréchet$(\mu, \sigma, \alpha)$, if its cumulative distribution function (CDF) and probability density function (PDF), respectively, are given by
$$F(x; \mu, \sigma, \alpha) = \left\{ \begin{array}{cc}
   0,  & x<\mu, \\
  \exp\left\{ -\left( \frac{x-\mu}{\sigma} \right)^{-\alpha} \right\},    & x\geq \mu,
\end{array}  \right.$$
and
\begin{equation*}\label{eq_PDF_Frechet}
    f(x; \mu, \sigma, \alpha) = \frac{\alpha}{\sigma}\left( \frac{x-\mu}{\sigma} \right)^{-\alpha-1}  F(x; \mu, \sigma, \alpha)\mathbbm{1}_{(\mu, \infty)}(x),
\end{equation*}
where $\mathbbm{1}_A$ denotes the indicator function on the set $A$.
Figure \ref{fig:density_Frechet} shows the behavior of $f(x; \mu, \sigma, \alpha)$ for some parameter choices.

\begin{figure}[htb!]
	\centering
	\includegraphics[width=1.0\linewidth]{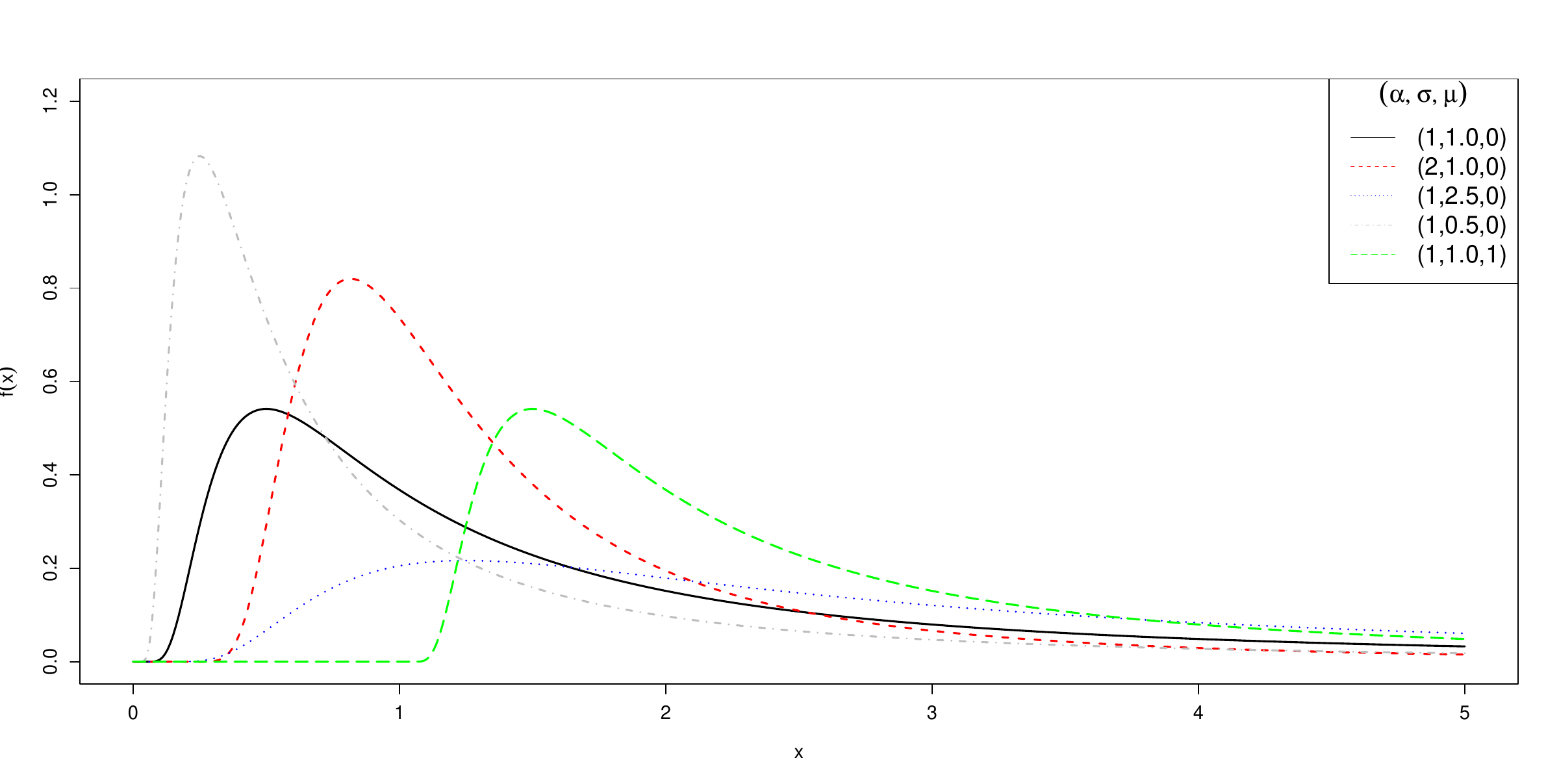}
	\caption{Plot of the PDF $f(x; \mu, \sigma, \alpha)$ with varying parameters.}
	\label{fig:density_Frechet}
\end{figure}

\subsection{A bivariate distribution with Fréchet marginals}
 Let $(X_1, Y_1), (X_2, Y_2), \cdots $ be a sequence of independent and identically distributed bivariate random vectors with the same joint CDF $G(x,y)$. We consider the pair
 $$(X_{(n)}, Y_{(n)}) := \left( \max \{X_j;~ j=1,\cdots, n\}, \max \{Y_j;~ j=1,\cdots, n\} \right).$$
Analogously to the univariate case, we are interested in the (nondegenerate) limit distribution satisfying
  \begin{equation}\label{eq_partial_maximum_bivariate}
    \left(\frac{X_{(n)}-b_n}{a_n}, \frac{Y_{(n)}-d_n}{c_n}\right)\overset{d}{\rightarrow}F(x,y),~~\mbox{as}~~n\rightarrow\infty,
\end{equation}  
where $a_n>0$ and $c_n>0$. If a joint CDF $F(x,y)$ satisfies \eqref{eq_partial_maximum_bivariate}, we call it a bivariate extreme value distribution.

Two general forms for the bivariate extreme value distributions $F(x,y)$ in terms of the marginal distributions were described in \cite{Gumbel1967} \cite[see also][]{Elshamy1992}.

We say that a random vector  $(X, Y)$ follows a bivariate extreme distribution with Fréchet margins  \cite[as discussed in][p. 14, Item 21]{Elshamy1992} if its joint CDF is given by
	\begin{align}\label{id-00}
    		F_{X,Y}(x,y)
		=
		\exp\left\{-\left({x\over\sigma_1}\right)^{-\alpha}-\left({y\over\sigma_2}\right)^{-\alpha} +
		\rho\left[\left({x\over\sigma_1}\right)^\alpha+\left({y\over\sigma_2}\right)^\alpha\right]^{-1}\right\},
	\end{align}
	where $x\geqslant0$, $y\geqslant 0$;
	$\sigma_1>0,\sigma_2>0, \alpha>0$ and $0\leqslant \rho\leqslant 1$. The parameter $\rho$ represents the degree of association between $X$ and $Y$. A value of $\rho=0$ indicates that $X$ and $Y$ are independent with distributions Fréchet$(0, \sigma_1, \alpha)$ and Fréchet$(0, \sigma_2, \alpha)$, respectively. Figure \ref{fig:jointPDF} shows the behavior of the joint PDF corresponding to \eqref{eq_partial_maximum_bivariate} and the contour lines of the joint density for $\alpha=2, \sigma_1=\sigma_2=1$ and $\rho\in\{0, 0.5, 0.9\}$.

\begin{figure}[htb!]
    \centering
    \subfloat[$\rho=0$]{\includegraphics[width=.5\linewidth]{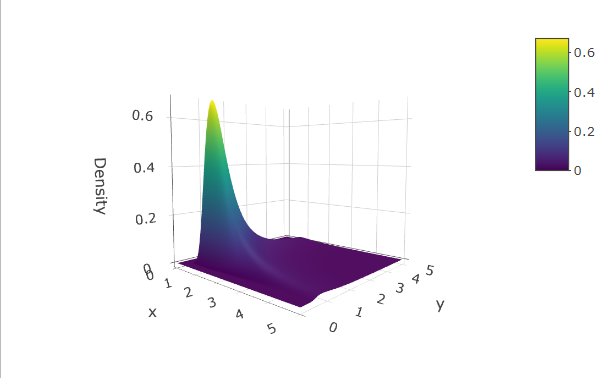}}
    \subfloat[$\rho=0$]{\includegraphics[width=.5\linewidth]{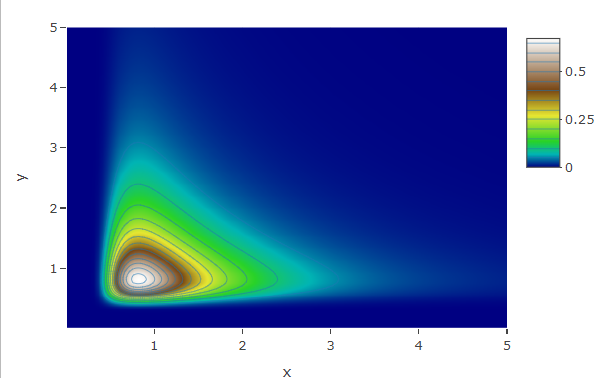}} \\
    \subfloat[$\rho=0.5$]{\includegraphics[width=.5\linewidth]{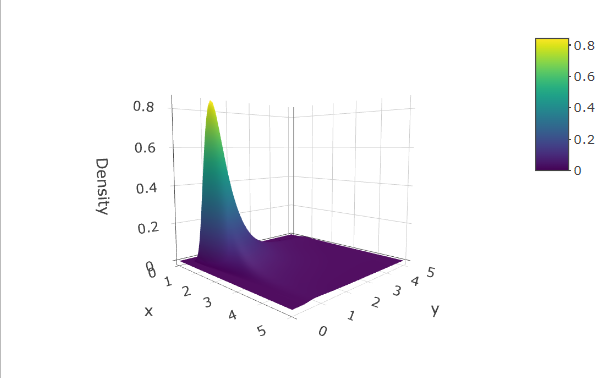}}
    \subfloat[$\rho=0.5$]{\includegraphics[width=.5\linewidth]{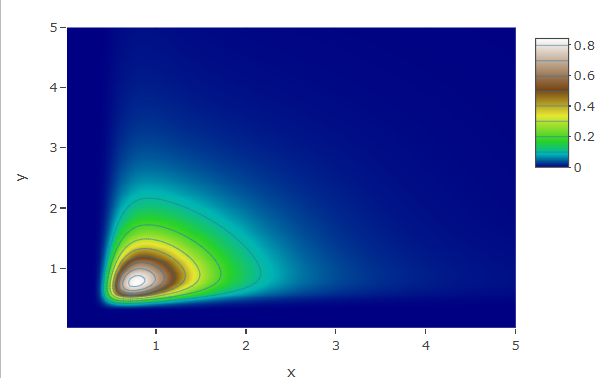}}\\
        \subfloat[$\rho=0.9$]{\includegraphics[width=.5\linewidth]{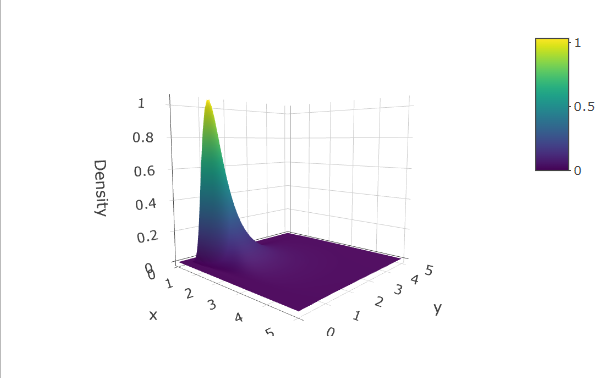}}
    \subfloat[$\rho=0.9$]{\includegraphics[width=.5\linewidth]{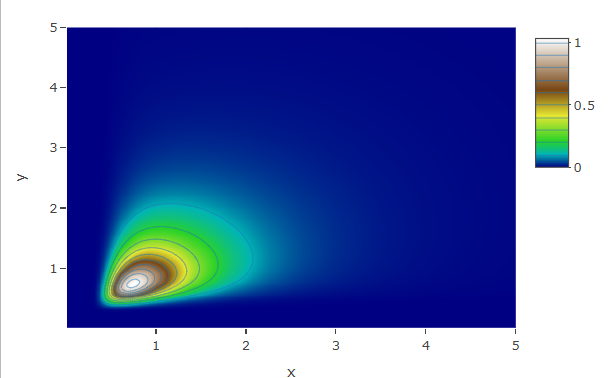}} 
    \caption{Joint PDF of bivariate extreme distribution with Fréchet margins and contour lines of the joint density for $\rho=0$ in (a) and (b), $\rho=0.5$ in (c) and (d), and $\rho=0.9$ in (e) and (f). $\alpha=2, \sigma_1=\sigma_2=1$.}
    \label{fig:jointPDF}
\end{figure}

 \section{The unit Fréchet distribution} \label{sec_main}

	We say that a continuous random variable $W$, with support $(0, 1)$, follows a unit-extreme (UF) distribution with parameter vector $\boldsymbol{\theta} = (\sigma,\alpha,\rho)^\top$, $\sigma > 0$, $\alpha> 0$ and $0\leqslant \rho\leqslant 1$, denoted by $W\sim {\rm UF}(\boldsymbol{\theta})$, if its PDF is given by (for $0 < w < 1$)
	\begin{align}\label{pdf-main}
		f_W(w;\boldsymbol{\theta})
		=
		{\alpha\over \sigma^\alpha}\,
		s^{\alpha-1}(s+1)^2
		\left\{ \displaystyle
		{\displaystyle2\left({s^\alpha\over \sigma^\alpha}+1\right)^2
			-
			\rho\left[\left({s^\alpha\over \sigma^\alpha}\right)^2+1\right]
			\over 
			\displaystyle
			\left[
			\left({s^\alpha\over \sigma^\alpha}+1\right)^2
			-\rho\, {s^\alpha\over\sigma^\alpha}
			\right]^2}
		- 
		{\displaystyle 1\over\displaystyle \left({s^\alpha\over \sigma^\alpha} + 1\right)^2}
		\right\},
		\quad 
		s={w\over 1-w}.
	\end{align}
	
	As 
	\begin{align*}
		{\partial {s^\alpha\over \sigma^\alpha}\over\partial w}
		=
		{\alpha\over \sigma^\alpha}\, s^{\alpha-1}(s+1)^2,
	\end{align*}
	by the chain rule, the PDF of $W$ is written as
	\begin{align}\label{f-expression}
	f_W(w;\boldsymbol{\theta})
	=
	{\partial G({s^\alpha\over \sigma^\alpha};\rho)\over\partial w},
	\end{align}
	where $G(x;\rho)=\int_0^x g(t;\rho){\rm d}t$, $x>0$, is the CDF corresponding to the following PDF
	\begin{align}\label{def-g-cdf}
	g(x;\rho)={2(x+1)^2-\rho(x^2+1)\over [(x+1)^2-\rho x]^2}-{1\over (x+1)^2}, \quad x>0.
	\end{align}
	It is clear that
	\begin{align}\label{def-G-cdf}
	G(x;\rho)
	=
	{x\over x+1}\, {(x+1)^2-\rho\over (x+1)^2-\rho x}, \quad x>0.
	\end{align}
	Therefore, from \eqref{f-expression}, the CDF of $W$ is given by (for $0 < w < 1$)
	\begin{align}\label{CDF}
	F_W(w;\boldsymbol{\theta})
	=
	G\left({s^\alpha\over \sigma^\alpha};\rho\right)
	=
		\dfrac{\displaystyle {s^\alpha\over\sigma^\alpha} }{\displaystyle {s^\alpha\over\sigma^\alpha}+1}
	\,
	\dfrac{\displaystyle\left({s^\alpha\over\sigma^\alpha}+1\right)^{2}-\rho}
	{
		\displaystyle\left({s^\alpha\over\sigma^\alpha}+1\right)^{2}-\rho\,{s^\alpha\over\sigma^\alpha}
	},
	\quad 
	s={w\over 1-w}.
	\end{align}

\begin{remark}
	Taking $\rho=0$ in \eqref{CDF}, we get (for $0 < w < 1$)
	\begin{align*}
	F_W(w;\boldsymbol{\theta})
	=
	\dfrac{\displaystyle {s^\alpha}}{\displaystyle {s^\alpha}+\sigma^\alpha}
	=
		\dfrac{\displaystyle {w^\alpha}}{\displaystyle {w^\alpha}+\sigma^\alpha(1-w)^\alpha}.
	\end{align*}
\end{remark}

Figures \ref{fig:f_W} and \ref{fig:F_W} show the behavior of $f_W$ for some parameter choices.
Note that we can have a symmetric distribution centered around 0.5, with a shape that varies according to the concentration around the mean (including a U shape). It is also possible to obtain asymmetric models, depending on the choice of parameters.

\begin{figure}[htb!]
	\centering
	\includegraphics[width=1.0\linewidth]{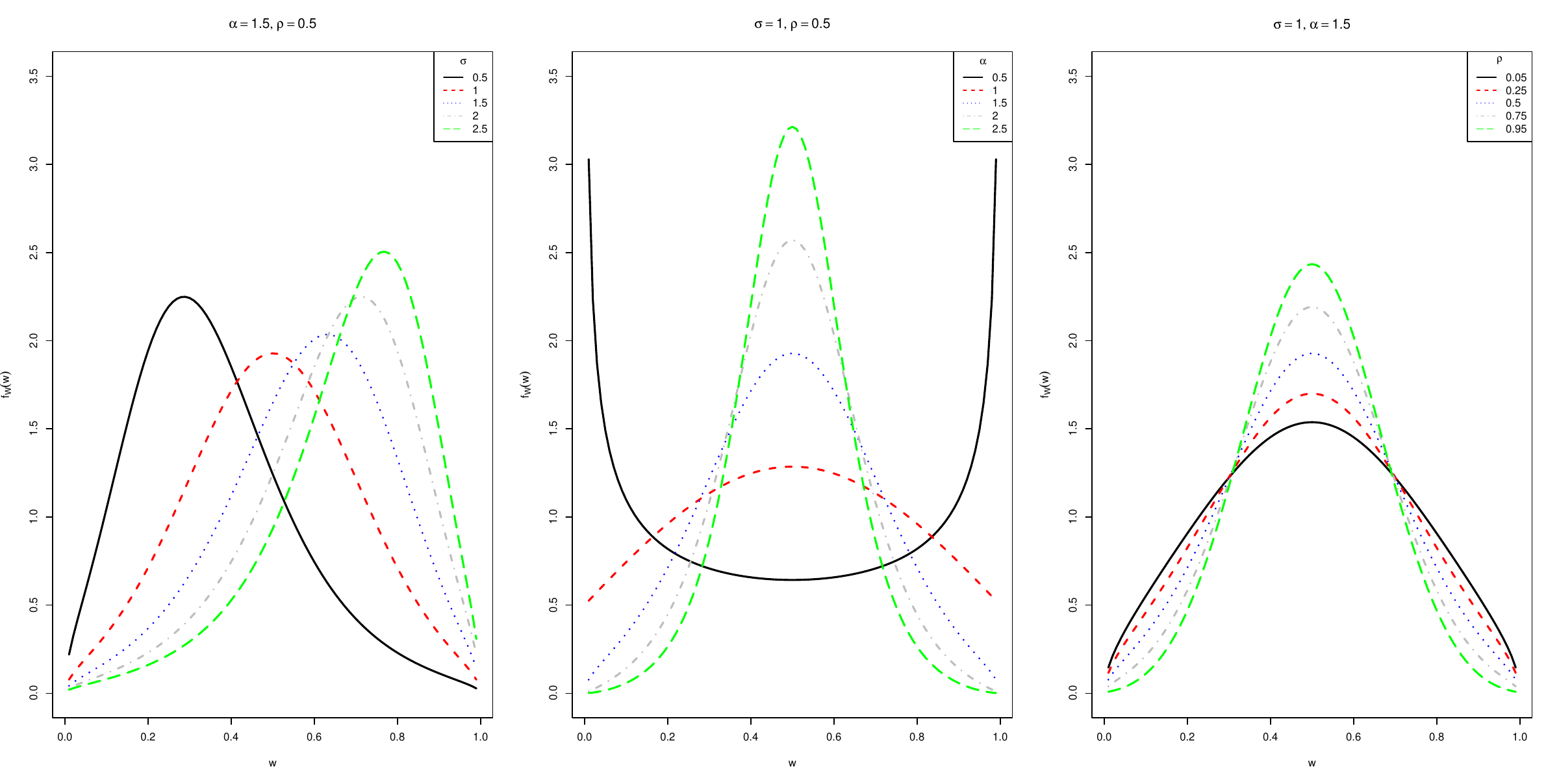}
	\caption{Plot of the PDF $f_W$ with varying parameters $\alpha_1$ (left), $\alpha_2$ (middle) and $\rho$ (right).}
	\label{fig:f_W}
\end{figure}

\begin{figure}[htb!]
	\centering
	\includegraphics[width=1.0\linewidth]{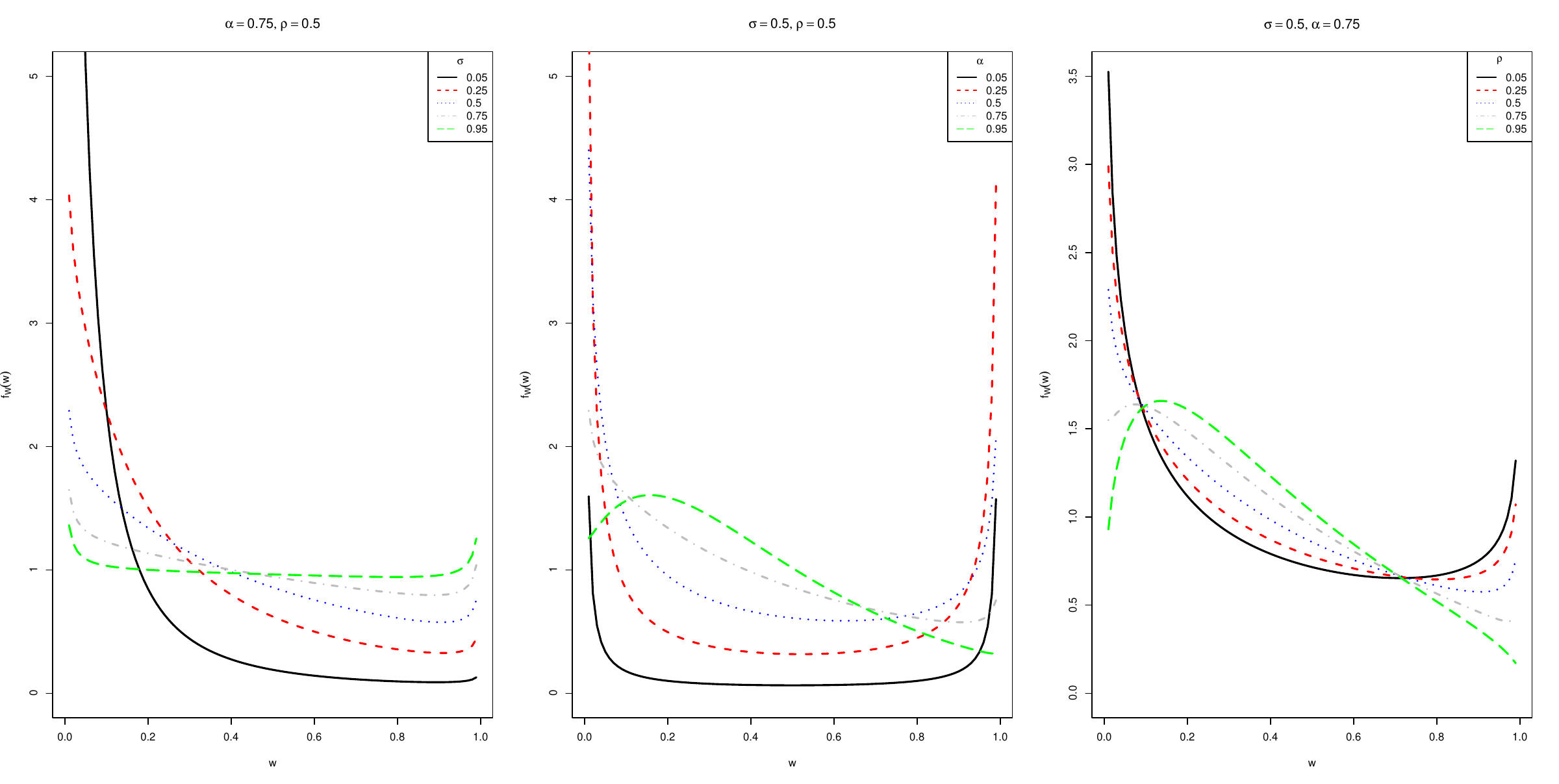}
	\caption{Plot of the PDF $f_W$ with varying parameters $\alpha_1$ (left), $\alpha_2$ (middle) and $\rho$ (right).}
	\label{fig:F_W}
\end{figure}    
	
	\section{Some basic properties}\label{sec_properties}
	We establish here some basic mathematical properties of the unit-Fréchet distribution.
	
%
%

	\subsection{Identifiability}
	
	Under some restrictions on the parameter space $\Theta\equiv(0,\infty)\times (0, \infty)\times [0,1]$, the following result
	establishes the identifiability of UF distribution. To establish this result, we write $\Theta_*=\{\boldsymbol{\theta}\in\Theta: \alpha\in\mathbb{N} \ \text{and} \ \rho<1\}$.
	
	\begin{theorem}[Identifiability] 
The mapping $\Theta_* \ni \boldsymbol{\theta}\mapsto F_W(w;\boldsymbol{\theta})$, $\forall 0<w<1$, is one-to-one.
	\end{theorem}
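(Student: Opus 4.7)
The plan is to recover the three parameters from the tail behaviour of $F_W(\cdot;\boldsymbol{\theta})$ at the endpoints of $(0,1)$. Because $w\mapsto s=w/(1-w)$ is a bijection from $(0,1)$ onto $(0,\infty)$, the hypothesis $F_W(w;\boldsymbol{\theta}_1)=F_W(w;\boldsymbol{\theta}_2)$ for every $w\in(0,1)$ is, by \eqref{CDF}, equivalent to
\[
G\!\left(\tfrac{s^{\alpha_1}}{\sigma_1^{\alpha_1}};\rho_1\right)=G\!\left(\tfrac{s^{\alpha_2}}{\sigma_2^{\alpha_2}};\rho_2\right)\qquad\text{for every } s>0,
\]
so the whole problem reduces to analysing the scalar function $G(u;\rho)$ of \eqref{def-G-cdf}.

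The next step is to pin down the leading behaviour of $G(u;\rho)$ at the two endpoints of its domain. A short expansion of \eqref{def-G-cdf} near $u=0$ yields $G(u;\rho)=(1-\rho)\,u+O(u^{2})$, and combining the two terms in $1-G(u;\rho)$ over a common denominator gives the clean formula $1-G(u;\rho)=\bigl[(u+1)^{2}-\rho u^{2}\bigr]/\bigl\{(u+1)\bigl[(u+1)^{2}-\rho u\bigr]\bigr\}$, from which $1-G(u;\rho)=(1-\rho)/u+O(u^{-2})$ as $u\to\infty$. In both expansions the leading coefficient $1-\rho$ is strictly positive exactly because $\Theta_{*}$ requires $\rho<1$; this is the sole purpose of that restriction.

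Plugging $u_{i}=s^{\alpha_{i}}/\sigma_{i}^{\alpha_{i}}$ into the identity $G(u_{1};\rho_{1})=G(u_{2};\rho_{2})$, the $s\to 0^{+}$ asymptotics force the exponents of $s$ on the two sides to match, giving $\alpha_{1}=\alpha_{2}=:\alpha$, and then equating coefficients yields $(1-\rho_{1})/\sigma_{1}^{\alpha}=(1-\rho_{2})/\sigma_{2}^{\alpha}$. Repeating the analysis for $s\to\infty$ and using the tail expansion yields the complementary relation $(1-\rho_{1})\,\sigma_{1}^{\alpha}=(1-\rho_{2})\,\sigma_{2}^{\alpha}$. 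Multiplying these two gives $(1-\rho_{1})^{2}=(1-\rho_{2})^{2}$, hence $\rho_{1}=\rho_{2}$ since both factors are positive; dividing them gives $\sigma_{1}^{2\alpha}=\sigma_{2}^{2\alpha}$, hence $\sigma_{1}=\sigma_{2}$. Thus $\boldsymbol{\theta}_{1}=\boldsymbol{\theta}_{2}$ and the map is one-to-one.

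The main obstacle I expect is the clean derivation of the two asymptotic formulas for $G(u;\rho)$ uniformly in $\rho\in[0,1)$; the rest is elementary algebra. I also note that the integrality assumption $\alpha\in\mathbb{N}$ in $\Theta_{*}$ is never actually used in this route, so the argument in fact establishes identifiability for every $\alpha>0$, whereas the restriction $\rho<1$ is genuinely indispensable, since otherwise the leading coefficients $1-\rho$ vanish and the asymptotic matching breaks down.
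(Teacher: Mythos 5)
Your proof is correct, but it takes a genuinely different route from the paper. The paper clears denominators in the identity $F_W(w;\boldsymbol{\theta}_1)=F_W(w;\boldsymbol{\theta}_2)$ to obtain a long polynomial-type identity in $s$ (its equation \eqref{id-main}); the integrality assumption $\alpha\in\mathbb{N}$ is what legitimizes ``equating coefficients of polynomials'' there, first to force $\alpha_1=\alpha_2$ (a mismatch would force $\rho_2=1$, contradicting $\rho<1$), and then to extract the constant-term relation \eqref{eq1} and the $s^{4\alpha}$-coefficient relation \eqref{eq2}, from which $\rho_1=\rho_2$ and $\sigma_1=\sigma_2$ follow. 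You instead read off the two endpoint asymptotics of $G(\cdot;\rho)$ in \eqref{def-G-cdf}--\eqref{def-G-cdf}: $G(u;\rho)=(1-\rho)u+O(u^2)$ at $u\to0^+$ and $1-G(u;\rho)=(1-\rho)/u+O(u^{-2})$ at $u\to\infty$ (your closed form $1-G(u;\rho)=[(u+1)^2-\rho u^2]/\{(u+1)[(u+1)^2-\rho u]\}$ checks out), and then match orders and leading constants in $s$ at $s\to0^+$ and $s\to\infty$. This yields $\alpha_1=\alpha_2$ together with $(1-\rho_1)/\sigma_1^{\alpha}=(1-\rho_2)/\sigma_2^{\alpha}$ and $(1-\rho_1)\sigma_1^{\alpha}=(1-\rho_2)\sigma_2^{\alpha}$, and multiplying/dividing these finishes the argument; the multiplication/division step is valid precisely because $1-\rho_i>0$. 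Your observation about uniformity in $\rho$ is a non-issue, since $\rho_1,\rho_2$ are fixed. What your route buys is brevity (no heavy expansion) and generality: it never uses $\alpha\in\mathbb{N}$, so it proves identifiability on the larger set $\{\boldsymbol{\theta}\in\Theta:\rho<1\}$, of which the stated result on $\Theta_*$ is a special case; what both routes share is the essential role of $\rho<1$, without which the leading coefficient $1-\rho$ vanishes and identifiability of the remaining parameters genuinely degenerates.
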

	\begin{proof}
	Let us suppose that 
	$F_W(w;\boldsymbol{\theta}_1)=F_W(w;\boldsymbol{\theta}_2)
	$, for all $0 < w < 1$, where $\boldsymbol{\theta}_i=(\sigma_i,\alpha_i,\rho_i)^\top\in\Theta$, $i=1,2$. From \eqref{CDF} the following identity  (for $0 < w < 1$)
	\begin{align*}
		\dfrac{\displaystyle {s^{\alpha_1}\over\sigma_1^{\alpha_1}} }{\displaystyle {s^{\alpha_1}\over \sigma_1^{\alpha_1}}+1}
	\,
	\dfrac{\displaystyle\left({s^{\alpha_1}\over \sigma_1^{\alpha_1}}+1\right)^{2}-\rho_1}
	{
		\displaystyle\left({s^{\alpha_1}\over\sigma_1^{\alpha_1}}+1\right)^{2}-{\rho_1}\,{s^{\alpha_1}\over{\sigma_1}^{\alpha_1}}
	}
	=
			\dfrac{\displaystyle {s^{\alpha_2}\over\sigma_2^{\alpha_2}} }{\displaystyle {s^{\alpha_2}\over \sigma_2^{\alpha_2}}+1}
	\,
	\dfrac{\displaystyle\left({s^{\alpha_2}\over \sigma_2^{\alpha_2}}+1\right)^{2}-\rho_2}
	{
		\displaystyle\left({s^{\alpha_2}\over\sigma_2^{\alpha_2}}+1\right)^{2}-{\rho_2}\,{s^{\alpha_2}\over{\sigma_2}^{\alpha_2}}
	},
	\quad 
	s={w\over 1-w},
	\end{align*}
	is valid. The above identity can be written as
	\begin{align}
	&
	s^{\alpha_1}
	\Bigg[
	{\rho_1\rho_2-3\rho_1-\rho_2+3\over \sigma_1^{\alpha_1}\sigma_2^{2\alpha_2}}\, 
	s^{2\alpha_2}
	+
	{\rho_1\rho_2-3\rho_1-\rho_2+3\over \sigma_1^{\alpha_1}\sigma_2^{\alpha_2}}\, 
	s^{\alpha_2}
	+
	{1-\rho_1\over \sigma_1^{\alpha_1}\sigma_2^{3\alpha_2}}\, s^{3\alpha_2}
	+
	{3-\rho_2\over \sigma_1^{3\alpha_1}\sigma_2^{2\alpha_2}}\, s^{2(\alpha_1+\alpha_2)}
		\nonumber
	\\[0,2cm]
	&+
	{3-\rho_2\over \sigma_1^{3\alpha_1}\sigma_2^{\alpha_2}}\, s^{2\alpha_1+\alpha_2}
	+
	{6-2\rho_2\over \sigma_1^{2\alpha_1}\sigma_2^{2\alpha_2}}\, s^{\alpha_1+2\alpha_2}
		+
	{6-2\rho_2\over \sigma_1^{2\alpha_1}\sigma_2^{\alpha_2}}\, s^{\alpha_1+\alpha_2}
	+
	{1\over \sigma_1^{3\alpha_1}\sigma_2^{3\alpha_2}}\, s^{2\alpha_1+3\alpha_2}
	+
	{2\over \sigma_1^{2\alpha_1}\sigma_2^{3\alpha_2}}\, s^{\alpha_1+3\alpha_2}
		\nonumber
		\\[0,2cm]
	&
	+
	{1\over \sigma_1^{3\alpha_1}}\, s^{2\alpha_1}
	+
	{2\over \sigma_1^{2\alpha_1}}\, s^{\alpha_1}
	+
	{1-\rho_1\over\sigma_1^{\alpha_1}}
	\Bigg]
		\nonumber
			\\[0,2cm]
	&=
		s^{\alpha_2}
	\Bigg[
	{\rho_2\rho_1-3\rho_2-\rho_1+3\over \sigma_2^{\alpha_2}\sigma_1^{2\alpha_1}}\, 
	s^{2\alpha_1}
	+
	{\rho_2\rho_1-3\rho_2-\rho_1+3\over \sigma_2^{\alpha_2}\sigma_1^{\alpha_1}}\, 
	s^{\alpha_1}
	+
	{1-\rho_2\over \sigma_2^{\alpha_2}\sigma_1^{3\alpha_1}}\, s^{3\alpha_1}
	+
	{3-\rho_1\over \sigma_2^{3\alpha_2}\sigma_1^{2\alpha_1}}\, s^{2(\alpha_2+\alpha_1)}
		\nonumber
	\\[0,2cm]
	&+
	{3-\rho_1\over \sigma_2^{3\alpha_2}\sigma_1^{\alpha_1}}\, s^{2\alpha_2+\alpha_1}
	+
	{6-2\rho_1\over \sigma_2^{2\alpha_2}\sigma_1^{2\alpha_1}}\, s^{\alpha_2+2\alpha_1}
	+
	{6-2\rho_1\over \sigma_2^{2\alpha_2}\sigma_1^{\alpha_1}}\, s^{\alpha_2+\alpha_1}
	+
	{1\over \sigma_2^{3\alpha_2}\sigma_1^{3\alpha_1}}\, s^{2\alpha_2+3\alpha_1}
	+
	{2\over \sigma_2^{2\alpha_2}\sigma_1^{3\alpha_1}}\, s^{\alpha_2+3\alpha_1}
	\nonumber
	\\[0,2cm]
	&
	+
	{1\over \sigma_2^{3\alpha_2}}\, s^{2\alpha_2}
	+
	{2\over \sigma_2^{2\alpha_2}}\, s^{\alpha_2}
	+
	{1-\rho_2\over\sigma_2^{\alpha_2}}
	\Bigg]. \label{id-main}
	\end{align}
	Now, suppose $\alpha_1\neq \alpha_2$ and $\boldsymbol{\theta}_i\in\Theta_*$, $i=1,2$. Then, from \eqref{id-main}, we have
		\begin{align*}
		&
		{\rho_1\rho_2-3\rho_1-\rho_2+3\over \sigma_1^{\alpha_1}\sigma_2^{2\alpha_2}}\, 
		s^{\alpha_1+\alpha_2}
		+
		{\rho_1\rho_2-3\rho_1-\rho_2+3\over \sigma_1^{\alpha_1}\sigma_2^{\alpha_2}}\, 
		s^{\alpha_1}
		+
		{1-\rho_1\over \sigma_1^{\alpha_1}\sigma_2^{3\alpha_2}}\, s^{\alpha_1+2\alpha_2}
		+
		{3-\rho_2\over \sigma_1^{3\alpha_1}\sigma_2^{2\alpha_2}}\, s^{3\alpha_1+\alpha_2}
		\\[0,2cm]
		&+
		{3-\rho_2\over \sigma_1^{3\alpha_1}\sigma_2^{\alpha_2}}\, s^{3\alpha_1}
		+
		{6-2\rho_2\over \sigma_1^{2\alpha_1}\sigma_2^{2\alpha_2}}\, s^{2\alpha_1+\alpha_2}
		+
		{6-2\rho_2\over \sigma_1^{2\alpha_1}\sigma_2^{\alpha_2}}\, s^{2\alpha_1}
		+
		{1\over \sigma_1^{3\alpha_1}\sigma_2^{3\alpha_2}}\, s^{3\alpha_1+2\alpha_2}
		+
		{2\over \sigma_1^{2\alpha_1}\sigma_2^{3\alpha_2}}\, s^{2(\alpha_1+\alpha_2)}
		\\[0,2cm]
		&
		+
		{1\over \sigma_1^{3\alpha_1}}\, 
		s^{3\alpha_1-\alpha_2}
		+
		{2\over \sigma_1^{2\alpha_1}}\, 
		s^{2\alpha_1-\alpha_2}
		+
		{1-\rho_1\over\sigma_1^{\alpha_1}}\,
		s^{\alpha_1-\alpha_2}
		\\[0,2cm]
		&=
		{\rho_2\rho_1-3\rho_2-\rho_1+3\over \sigma_2^{\alpha_2}\sigma_1^{2\alpha_1}}\, 
		s^{2\alpha_1}
		+
		{\rho_2\rho_1-3\rho_2-\rho_1+3\over \sigma_2^{\alpha_2}\sigma_1^{\alpha_1}}\, 
		s^{\alpha_1}
		+
		{1-\rho_2\over \sigma_2^{\alpha_2}\sigma_1^{3\alpha_1}}\, s^{3\alpha_1}
		+
		{3-\rho_1\over \sigma_2^{3\alpha_2}\sigma_1^{2\alpha_1}}\, s^{2(\alpha_2+\alpha_1)}
		\\[0,2cm]
		&+
		{3-\rho_1\over \sigma_2^{3\alpha_2}\sigma_1^{\alpha_1}}\, s^{2\alpha_2+\alpha_1}
		+
		{6-2\rho_1\over \sigma_2^{2\alpha_2}\sigma_1^{2\alpha_1}}\, s^{\alpha_2+2\alpha_1}
		+
		{6-2\rho_1\over \sigma_2^{2\alpha_2}\sigma_1^{\alpha_1}}\, s^{\alpha_2+\alpha_1}
		+
		{1\over \sigma_2^{3\alpha_2}\sigma_1^{3\alpha_1}}\, s^{2\alpha_2+3\alpha_1}
		+
		{2\over \sigma_2^{2\alpha_2}\sigma_1^{3\alpha_1}}\, s^{\alpha_2+3\alpha_1}
		\\[0,2cm]
		&
		+
		{1\over \sigma_2^{3\alpha_2}}\, s^{2\alpha_2}
		+
		{2\over \sigma_2^{2\alpha_2}}\, s^{\alpha_2}
		+
		{1-\rho_2\over\sigma_2^{\alpha_2}},
	\end{align*}
	from where, by equality of the constant terms of polynomials, we have $\rho_2=1$, which is a contradiction, since $\boldsymbol{\theta}_i\in\Theta_*$,  $i=1,2$. Therefore, we have $\alpha_1=\alpha_2$. Consequently, by using the notation $\alpha_1=\alpha_2\equiv\alpha$,  from \eqref{id-main} it follows that
		\begin{align*}
		&
		\left[
		{\rho_1\rho_2-3\rho_1-\rho_2+3\over \sigma_1^{\alpha}\sigma_2^{2\alpha}}
			+
		{6-2\rho_2\over \sigma_1^{2\alpha}\sigma_2^{\alpha}}
		+
		{1\over \sigma_1^{3\alpha}}
		\right]
		s^{2\alpha}
		+
		\left[
		{\rho_1\rho_2-3\rho_1-\rho_2+3\over \sigma_1^{\alpha}\sigma_2^{\alpha}}
		+
		{2\over \sigma_1^{2\alpha}}
		\right]
		s^{\alpha}
		\\[0,2cm]
		&
		+
		\left[
		{1-\rho_1\over \sigma_1^{\alpha}\sigma_2^{3\alpha}}
		+
		{3-\rho_2\over \sigma_1^{3\alpha}\sigma_2^{\alpha}}
		+
		{6-2\rho_2\over \sigma_1^{2\alpha}\sigma_2^{2\alpha}} 
		\right]
		s^{3\alpha}
		+
		{1\over \sigma_1^{3\alpha}\sigma_2^{3\alpha}}\, s^{5\alpha}
		+
		\left[
		{3-\rho_2\over \sigma_1^{3\alpha}\sigma_2^{2\alpha}}
		+
		{2\over \sigma_1^{2\alpha}\sigma_2^{3\alpha}}
		\right]
		 s^{4\alpha}
		+
		{1-\rho_1\over\sigma_1^{\alpha}}
		\nonumber
		\\[0,2cm]
		&=
		\left[
		{\rho_2\rho_1-3\rho_2-\rho_1+3\over \sigma_2^{\alpha}\sigma_1^{2\alpha}}
				+
		{6-2\rho_1\over \sigma_2^{2\alpha}\sigma_1^{\alpha}}
			+
		{1\over \sigma_2^{3\alpha}}
		\right] 
		s^{2\alpha}
		+
		\left[
		{\rho_2\rho_1-3\rho_2-\rho_1+3\over \sigma_2^{\alpha}\sigma_1^{\alpha}}
		+
		{2\over \sigma_2^{2\alpha}}
		\right]		
		s^{\alpha}
		\\[0,2cm]
		&
		+
		\left[
		{1-\rho_2\over \sigma_2^{\alpha}\sigma_1^{3\alpha}}
		+
		{3-\rho_1\over \sigma_2^{3\alpha}\sigma_1^{\alpha}}
		+
		{6-2\rho_1\over \sigma_2^{2\alpha}\sigma_1^{2\alpha}}
		\right]
		s^{3\alpha}
		+
		{1\over \sigma_2^{3\alpha}\sigma_1^{3\alpha}}\,
		s^{5\alpha}
		+
		\left[
		{3-\rho_1\over \sigma_2^{3\alpha}\sigma_1^{2\alpha}}
		+
		{2\over \sigma_2^{2\alpha}\sigma_1^{3\alpha}}
		\right]
		s^{4\alpha}
		+
		{1-\rho_2\over\sigma_2^{\alpha}}.
	\end{align*}
	By the equality of the constant terms of polynomials, we have
	\begin{align}\label{eq1}
		{1-\rho_1\over\sigma_1^{\alpha}}
		=
	{1-\rho_2\over\sigma_2^{\alpha}}
	\quad \Longleftrightarrow \quad
	{1\over\sigma_1^\alpha}-{1\over\sigma_2^\alpha}
	=
	{\rho_1\over\sigma_1^\alpha}-{\rho_2\over\sigma_2^\alpha}.
	\end{align}
	Equating the coefficients of the monomial $s^{4\alpha}$, it follows that
	\begin{align}\label{eq2}
	{1\over\sigma_1^\alpha}-{1\over\sigma_2^\alpha}
	=
	{\rho_2\over\sigma_1^\alpha}-{\rho_1\over\sigma_2^\alpha}.
	\end{align}
	By combining \eqref{eq1} and \eqref{eq2}, we obtain
	\begin{align*}
	{\rho_2-\rho_1\over \sigma_1^\alpha}
	=
	-\left({\rho_2-\rho_1\over \sigma_2^\alpha}\right).
	\end{align*}
	If $\rho_1\neq \rho_2$ then $\sigma_2^\alpha=-\sigma_1^\alpha$,  which is a contradiction since $\sigma_1$ and $\sigma_2$ are positive. Therefore, $\rho_1=\rho_2$. Substituting this identity into \eqref{eq1}, we have $\sigma_1^{\alpha}=\sigma_2^{\alpha}$, from which we obtain $\sigma_1=\sigma_2$. So we have completed the proof.
	\end{proof}

	\subsection{Symmetry}
	
	The following result shows the symmetry of the UF distribution when $\sigma= 1$. For $\sigma\neq 1$, the ULS distribution presents asymmetry.
	\begin{proposition}
	The UF PDF in \eqref{pdf-main} is symmetric around $w= 1/2$ provided
	$\sigma= 1$. 
	\end{proposition}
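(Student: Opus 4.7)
The plan is to show that $W\stackrel{d}{=}1-W$ when $\sigma=1$, which for a continuous random variable is equivalent to $f_W(w;\boldsymbol{\theta})=f_W(1-w;\boldsymbol{\theta})$ for all $w\in(0,1)$. Working at the level of the CDF is cleaner: the desired symmetry is equivalent to
\begin{align*}
F_W(1-w;\boldsymbol{\theta}) = 1 - F_W(w;\boldsymbol{\theta}), \quad 0<w<1.
\end{align*}

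Next, I would exploit the fact that the map $w\mapsto 1-w$ acts on $s=w/(1-w)$ by inversion: if $s=w/(1-w)$ then $(1-w)/w=1/s$. Substituting $1-w$ into \eqref{CDF} with $\sigma=1$ therefore gives $F_W(1-w;\boldsymbol{\theta})=G(1/s^\alpha;\rho)$, while $F_W(w;\boldsymbol{\theta})=G(s^\alpha;\rho)$. With $x=s^\alpha>0$ arbitrary (note that $\alpha$ plays no further role beyond this change of variables), the proposition reduces to the purely algebraic identity
\begin{align*}
G(1/x;\rho) + G(x;\rho) = 1, \quad x>0.
\end{align*}

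To verify this identity I would use the closed form \eqref{def-G-cdf}. Substituting $1/x$ into \eqref{def-G-cdf} and multiplying the numerator and denominator of each factor by appropriate powers of $x$ to clear reciprocals yields
\begin{align*}
G(1/x;\rho) = \frac{1}{x+1}\cdot\frac{(x+1)^2-\rho x^2}{(x+1)^2-\rho x}.
\end{align*}
On the other side, computing $1-G(x;\rho)$ from \eqref{def-G-cdf} over the common denominator $(x+1)\bigl[(x+1)^2-\rho x\bigr]$ and expanding $(x+1)\bigl[(x+1)^2-\rho x\bigr]-x\bigl[(x+1)^2-\rho\bigr]$ telescopes to $(x+1)^2-\rho x^2$, giving exactly the same expression.

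The main (and only real) obstacle is this last algebraic simplification; everything else is a mechanical unfolding of definitions. Performing the computation through the CDF rather than the density \eqref{pdf-main} avoids having to manipulate the Jacobian factor $\alpha s^{\alpha-1}(s+1)^2$ and the somewhat unwieldy expression for $g(\cdot;\rho)$; the equivalent density-level statement $g(1/x;\rho)=x^2 g(x;\rho)$ is also true but less economical to verify by hand.
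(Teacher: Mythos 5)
Your proof is correct, and it takes a cleaner route than the paper. The paper works directly at the density level: it sets $s_0=\bigl(\tfrac12-\delta\bigr)/\bigl(\tfrac12+\delta\bigr)$, writes out $f_W(\tfrac12-\delta;\boldsymbol{\theta})$ and $f_W(\tfrac12+\delta;\boldsymbol{\theta})$ from \eqref{pdf-main} with $\sigma=1$ (so the second argument involves $1/s_0$), and asserts that ``simple algebraic manipulations'' show the two expressions coincide—leaving the reader to handle the Jacobian factor $\alpha s^{\alpha-1}(s+1)^2$ and the bulky bracketed term. You instead pass to the CDF, observe that $w\mapsto 1-w$ inverts $s$, and reduce everything to the reflection identity $G(1/x;\rho)+G(x;\rho)=1$ for $x>0$, which you verify completely: indeed $(x+1)\bigl[(x+1)^2-\rho x\bigr]-x\bigl[(x+1)^2-\rho\bigr]=(x+1)^2-\rho x^2$, so $1-G(x;\rho)=\dfrac{(x+1)^2-\rho x^2}{(x+1)\bigl[(x+1)^2-\rho x\bigr]}=G(1/x;\rho)$. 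Differentiating $F_W(1-w;\boldsymbol{\theta})=1-F_W(w;\boldsymbol{\theta})$ (or invoking continuity of the density in \eqref{pdf-main}) then gives the stated PDF symmetry. The trade-off: the paper's argument stays entirely with the object named in the proposition but hides the real computation, while yours shifts to the CDF, eliminates $\alpha$ by the change of variable $x=s^\alpha$, and makes the key algebra explicit and checkable in one line; your closing remark that the density-level counterpart is $g(1/x;\rho)=x^2g(x;\rho)$ is also correct and is essentially what the paper's ``simple manipulations'' amount to.
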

	\begin{proof}
		Set
		\begin{align*}
		s_0={{1\over 2}-\delta\over {1\over 2}+\delta}, \quad \forall 0<\delta<{1\over 2}.
		\end{align*}
	By using \eqref{pdf-main} with $\sigma= 1$, note that
		\begin{align*}
	f_W\left({1\over 2}-\delta;\boldsymbol{\theta}\right)
	=
	{\alpha}
	s_0^{\alpha-1}(s_0+1)^2
	\left\{ \displaystyle
	{\displaystyle 2\left({s_0^\alpha}+1\right)^2
		-
		\rho (s_0^{2\alpha}+1)
		\over 
		\displaystyle
		\left[
		\left({s_0^\alpha}+1\right)^2
		-\rho {s_0^\alpha}
		\right]^2}
	- 
	{\displaystyle 1\over\displaystyle \left({s_0^\alpha} + 1\right)^2}
	\right\}
	\end{align*}
	and
	\begin{align*}
		f_W\left({1\over 2}+\delta;\boldsymbol{\theta}\right)
	&=
	{\alpha}\,
	{1\over s_0^{\alpha-1}}\left({1\over s_0}+1\right)^2
	\left\{ \displaystyle
	{\displaystyle 2\left({1\over s_0^\alpha}+1\right)^2
		-
		\rho \left({1\over s_0^{2\alpha}}+1\right)
		\over 
		\displaystyle
		\left[
		\left({1\over s_0^\alpha}+1\right)^2
		-\rho\, {1\over s_0^\alpha}
		\right]^2}
	- 
	{\displaystyle 1\over\displaystyle \left({1\over s_0^\alpha} + 1\right)^2}
	\right\}.
	\end{align*}
	From the last two identities above, simple algebraic manipulations show that
	\begin{align*}
	f_W\left({1\over 2}-\delta;\boldsymbol{\theta}\right)
	=		
	f_W\left({1\over 2}+\delta;\boldsymbol{\theta}\right),
	\quad \forall 0<\delta<{1\over 2}.
	\end{align*}
	The required result then follows readily.	
	\end{proof}

	\subsection{Stochastic representation}
	
	Let $Y$ be the corresponding random variable of the CDF $G(\cdot;a)$ defined in \eqref{def-G-cdf}.
	A simple observation shows that if $W\sim {\rm UF}(\boldsymbol{\theta})$, then, from \eqref{CDF}, the CDF of $W$ can be expressed as (for $0 < w < 1$)
	\begin{align*}
		F_W(w;\boldsymbol{\theta})
		=
		\mathbb{P}\left(Y\leqslant{s^\alpha\over \sigma^\alpha}\right)
		=
		\mathbb{P}\left({\sigma Y^{1/\alpha}\over 1+ \sigma Y^{1/\alpha}}\leqslant w\right).
	\end{align*}
	In other words,
	\begin{align}\label{stochastic-rep}
	W\stackrel{d}{=}{\sigma Y^{1/\alpha}\over 1+ \sigma Y^{1/\alpha}},
	\end{align}
	where $\stackrel{d}{=}$
	means equality in distribution. Analogously, we also
	have
	\begin{align*}
	Y\stackrel{d}{=}\left[{W\over \sigma(1-W)}\right]^\alpha.
	\end{align*}

	\subsection{UF model arising as a ratio}
	In this subsection, we establish one of the most important properties of the proposed model. This property characterizes a UF random variable $W$ in \eqref{CDF} as a ratio of the
	form $X_1/(X_1 + X_2)$, where the random vector 
	$(X_1, X_2)^\top$ follows a bivariate extreme distribution (with Fréchet margins) as discussed in \cite[][p. 14, Item 21]{Elshamy1992}.
	
    Following \cite{Elshamy1992}, a random vector $(X_1, X_2)^\top$ has a  bivariate extreme distribution if its CDF, denoted by $F_{X_1,X_2}$, is given by \eqref{id-00}, that is,  
	\begin{align}\label{id-0}
		F_{X_1,X_2}(x_1,x_2)
		=
		\exp\left\{-\left({x_1\over\sigma_1}\right)^{-\alpha}-\left({x_2\over\sigma_2}\right)^{-\alpha} +
		\rho\left[\left({x_1\over\sigma_1}\right)^\alpha+\left({x_2\over\sigma_2}\right)^\alpha\right]^{-1}\right\},
	\end{align}
	where $x_1\geqslant0$, $x_2\geqslant 0$;
	$\sigma_1>0,\sigma_2>0, \alpha>0$ and $0\leqslant \rho\leqslant 1$.
	
	Setting
	\begin{align*}
		T\equiv{X_1\over X_1+X_2},
	\end{align*}
	we have (for $0<w<1$)
	\begin{align}\label{id-1}
		F_T(w)
		=
		\mathbb{P}\left({X_2\over X_1}\leqslant s\right)
		=
		\int_{0}^{\infty}
		\left[
		\int_{0}^{sx_2}
		f_{X_1,X_2}(x_1,x_2){\rm d}x_1
		\right] 
		{\rm d}x_2,
		\quad
		s={w\over 1-w}.
	\end{align}
	
	Note that (for any $r>0$)
	\begin{align}\label{id-2}
		\int_{0}^{r}
		f_{X_1,X_2}(x_1,x_2){\rm d}x_1
		=
		{\partial\over\partial x_2}
		\int_{0}^{r}
		{\partial\over\partial x_1}
		F_{X_1,X_2}(x_1,x_2){\rm d}x_1
		=
		{\partial\over\partial x_2}
		F_{X_1,X_2}(r,x_2).
	\end{align}
	Then, by using \eqref{id-2} in \eqref{id-1}, we have
	\begin{align*}
		F_T(w)
		=
		\int_{0}^{\infty}
		{\partial\over\partial x_2}
		F_{X_1,X_2}(r,x_2) 
		\bigg\vert_{r=sx_2}
		{\rm d}x_2.
	\end{align*}
		Using \eqref{id-0}, the integral above becomes
		\begin{align}
		&=
		\int_{0}^{\infty}
		F_{X_1,X_2}(r,x_2)  \,
		{\alpha\over  x_2}\left[\left(x_2\over\sigma_2\right)^{-\alpha}-{\rho\left(x_2\over\sigma_2\right)^{\alpha}\left[\left(r\over\sigma_1\right)^\alpha+\left(x_2\over\sigma_2\right)^\alpha\right]^{-2}}\right]
		\Bigg\vert_{r=sx_2}
		{\rm d}x_2
		\nonumber
		\\[0,2cm]
		&=
		{\displaystyle
			1-{{\rho\over\sigma_2^{2\alpha}}\left[\left(s\over\sigma_1\right)^\alpha+{1\over\sigma_2^\alpha} \right]^{-2}}
			\over 
			\displaystyle
			1+
			\left(
			{s\sigma_2\over\sigma_1}\right)^{-\alpha}
			-
			{\rho\over\sigma_2^\alpha}\left[\left({s\over\sigma_1}\right)^\alpha+{1\over\sigma_2^\alpha}\right]^{-1}
		}. \label{id-3}
	\end{align}
	
	From \eqref{id-1} and \eqref{id-3}, we have proven that
	\begin{align}\label{id-4}
		F_T(w)
		=
		{\displaystyle
			1-{\rho\left[\left({\sigma_2\over\sigma_1}\, s\right)^\alpha+1 \right]^{-2}}
			\over 
			\displaystyle
			1+
			\left(
			{ \sigma_2\over\sigma_1}\, s\right)^{-\alpha}
			-
			\rho\left[\left({\sigma_2\over\sigma_1}\, s\right)^\alpha+1\right]^{-1}
		}.
	\end{align}
	Setting $\sigma\equiv\sigma_1/\sigma_2$, the above identity is written as (for $0 < w < 1$)
	\begin{align}\label{id-5}
		F_T(w)
		=
		\dfrac{\displaystyle {s^\alpha\over\sigma^\alpha} }{\displaystyle\left({s^\alpha\over\sigma^\alpha}+1\right)}
		\,
		\dfrac{\displaystyle\left({s^\alpha\over\sigma^\alpha}+1\right)^{2}-\rho}
		{
			\displaystyle\left({s^\alpha\over\sigma^\alpha}+1\right)^{2}-\rho\,{s^\alpha\over\sigma^\alpha}
		}
		=
		F_W(w;\boldsymbol{\theta}),
			\quad
		s={w\over 1-w},
	\end{align}
where in the last equality we have used the formula in \eqref{CDF}.
That is,
\begin{align}\label{rep-est}
	W\stackrel{d}{=}{X_1\over X_1+X_2}.
\end{align}
In other words, distributionally, the UF random variable $W$ can be treated as the ratio in \eqref{rep-est} involving two random variables $X_1$ and $X_2$ with $(X_1, X_2)^\top$ following a bivariate extreme distribution \citep{Elshamy1992}.

\begin{remark}
	From \eqref{stochastic-rep} and \eqref{rep-est}, respectively, we have
\begin{align*}
	W\stackrel{d}{=}{1\over 1+{1\over \sigma Y^{1/\alpha}}}
	\quad \text{and} \quad 
	W\stackrel{d}{=}{1\over 1+{X_2\over X_1}}.
\end{align*}
From which it follows that
\begin{align*}
Y \stackrel{d}{=} \left({X_1\over \sigma X_2}\right)^\alpha.
\end{align*}
\end{remark}

\subsection{Approximations for the moments}\label{Approximations for the moments}

Let $g:[0,\infty)^2\to\mathbb{R}$ be a twice differentiable integrable function. The second order Taylor expansion of $g(\boldsymbol{x})$, with $\boldsymbol{x}=(x_1,x_2)^\top$, around $\boldsymbol{\mu}=(\mu_1,\mu_2)^\top=(\mathbb{E}(X_1),\mathbb{E}(X_2))^\top$ is
\begin{align*}
    g(\boldsymbol{x})
    &=
    g(\boldsymbol{\mu})
    +
    {\partial g(\boldsymbol{x})\over\partial x_1}\Bigg\vert_{\boldsymbol{x}=\boldsymbol{\mu}}
    (x_1-\mu_1)
    +
    {\partial g(\boldsymbol{x})\over\partial x_2}\Bigg\vert_{\boldsymbol{x}=\boldsymbol{\mu}}
    (x_2-\mu_2)
    \\[0,2cm]
    &+
    {1\over 2}\left\{
    {\partial^2 g(\boldsymbol{x})\over\partial x_1^2}\Bigg\vert_{\boldsymbol{x}=\boldsymbol{\mu}}
    (x_1-\mu_1)^2
    +
    2\,
    {\partial^2 g(\boldsymbol{x})\over\partial x_1\partial x_2}\Bigg\vert_{\boldsymbol{x}=\boldsymbol{\mu}}(x_1-\mu_1)(x_2-\mu_2)
    +
    {\partial^2 g(\boldsymbol{x})\over\partial x_2^2}\Bigg\vert_{\boldsymbol{x}=\boldsymbol{\mu}}
    (x_2-\mu_2)^2
    \right\}
    \\[0,2cm]
    &+
    R_2(\boldsymbol{x}),
\end{align*}
where $R_2(\boldsymbol{x})$ is a remainder of a smaller order than the terms in the equation. Hence, a better approximation for $\mathbb{E}[g(\boldsymbol{X})]$, with $\boldsymbol{X}=(X_1,X_2)^\top$ being as in \eqref{rep-est},  is 
\begin{align}\label{aprox-formula}
    \mathbb{E}[g(\boldsymbol{X})]
        &\approx
    g(\boldsymbol{\mu})
    +
    {1\over 2}\left\{
    {\partial^2 g(\boldsymbol{x})\over\partial x_1^2}\Bigg\vert_{\boldsymbol{x}=\boldsymbol{\mu}}
    {\rm Var}(X_1)
    +
    2\,
    {\partial^2 g(\boldsymbol{x})\over\partial x_1\partial x_2}\Bigg\vert_{\boldsymbol{x}=\boldsymbol{\mu}}
    {\rm Cov}(X_1,X_2)
    +
    {\partial^2 g(\boldsymbol{x})\over\partial x_2^2}\Bigg\vert_{\boldsymbol{x}=\boldsymbol{\mu}}
    {\rm Var}(X_2)
    \right\}.
    \end{align}

    By taking  $g(\boldsymbol{x})=[x_1/(x_1+x_2)]^p$, $p\in\mathbb{R}$, in \eqref{aprox-formula}, from \eqref{rep-est} the real moments of $W\sim{\rm UF}(\boldsymbol{\theta})$ can be approximated as follows
    \begin{align*}
    &\mathbb{E}(W^p)
        =
    \mathbb{E}\left(\left[{X_1\over X_1+X_2}\right]^p\right)
    \approx
    \left({\mu_1\over\mu_1+\mu_2}\right)^p
    \\[0,2cm]
    &+
    {p\mu_1^{p - 1}\over 2(\mu_1  + \mu_2 )^{p+2}}
    \left\{
{\mu_2\over \mu_1}\, [(p - 1)  \mu_2 - 2 \mu_1] 
    {\rm Var}(X_1)
    +
    2
    (\mu_1  - p \mu_2)
    {\rm Cov}(X_1,X_2)
    +
{(p + 1) \mu_1} 
    {\rm Var}(X_2)
    \right\}.
 \end{align*}

In particular,
    \begin{align*}
    \mathbb{E}(W)
        &\approx
    {\mu_1\over\mu_1+\mu_2}
    -
    {1\over (\mu_1 + \mu_2)^3}
    \left\{
\mu_2
    {\rm Var}(X_1)
    +
    {(\mu_2-\mu_1)} \, 
    {\rm Cov}(X_1,X_2)
    -
\mu_1 
    {\rm Var}(X_2)
    \right\}
    \end{align*}
    and
        \begin{align*}
    \mathbb{E}(W^2)
        &\approx
    \left({\mu_1\over\mu_1+\mu_2}\right)^2
    \\[0,2cm]
    &+
    {\mu_1 \over (\mu_1  + \mu_2 )^4}
    \left\{
{\mu_2\over \mu_1} \, (\mu_2 - 2 \mu_1)
    {\rm Var}(X_1)
    +
    2
    (\mu_1  - 2 \mu_2)
    {\rm Cov}(X_1,X_2)
    +
{3\mu_1} 
    {\rm Var}(X_2)
    \right\}.
 \end{align*}

 Therefore,
 \begin{align*}
     {\rm Var}(W)
        &\approx
    {\mu_1 \over (\mu_1  + \mu_2 )^4}
    \left\{
{\mu_2\over \mu_1} \, (\mu_2 - 2 \mu_1)
    {\rm Var}(X_1)
    +
    2
    (\mu_1  - 2 \mu_2)
    {\rm Cov}(X_1,X_2)
    +
{3\mu_1} 
    {\rm Var}(X_2)
    \right\} 
    \\[0,2cm]
    &  -
    {1\over (\mu_1 + \mu_2)^6}
    \left\{
\mu_2
    {\rm Var}(X_1)
    +
    {(\mu_2-\mu_1)} \, 
    {\rm Cov}(X_1,X_2)
    -
\mu_1 
    {\rm Var}(X_2)
    \right\}^2
        \\[0,2cm]
    & 
    +
    {2\mu_1\over (\mu_1 + \mu_2)^4}
    \left\{
\mu_2
    {\rm Var}(X_1)
    +
    {(\mu_2-\mu_1)} \, 
    {\rm Cov}(X_1,X_2)
    -
\mu_1 
    {\rm Var}(X_2)
    \right\}.    
    \end{align*}
        Since $\boldsymbol{X}=(X_1, X_2)^\top$ is as in \eqref{rep-est}, that is, $\boldsymbol{X}$  follows a bivariate extreme distribution with Fréchet margins \citep[see ][p. 14, Item 21]{Elshamy1992}, in the above identities we are considering that (for $i=1,2$)
        \begin{align*}
        &\mu_i=\sigma_i \Gamma\left(1-{1\over\alpha}\right), \quad \alpha>1,
        \\[0,2cm]
        &{\rm Var}(X_i)=\sigma_i^2\left[\Gamma\left(1-{2\over\alpha}\right)-\Gamma^2\left(1-{1\over\alpha}\right)\right], \quad \alpha>2,
        \\[0,2cm]
        &\vert{\rm Cov}(X_1,X_2)\vert\leqslant
        {\sigma_1\sigma_2\left[\Gamma\left(1-{2\over\alpha}\right)-\Gamma^2\left(1-{1\over\alpha}\right)\right]}, \quad \alpha>2.
        \end{align*}

\subsection{The stress-strength probability}

In what follows we explicitly obtain the stress-strength probability $R = \mathbb{P}(X_1 < X_2 )$ in the case where $(X_1, X_2)^\top$ follows a bivariate extreme distribution \citep{Elshamy1992}.

Indeed, by \eqref{id-1} and \eqref{id-5}, we have
\begin{align*}
    R=\mathbb{P}\left({X_2\over X_1}>1\right)
    =
    1-\mathbb{P}\left({X_2\over X_1}\leqslant 1\right)
    =
    1-F_T(w)\Big\vert_{w=1/2}
    =1-F_W(w;\boldsymbol{\theta})\Big\vert_{w=1/2}.
\end{align*}
Hence, as $w = 1/2$ if and only if $s = 1$, by using \eqref{CDF}, we get
\begin{align*}
    R=		
	\dfrac{\displaystyle 
		\displaystyle\left({1\over\sigma^\alpha}+1\right)^{3}
    -
    {1\over\sigma^\alpha}\left({1\over\sigma^\alpha}+1\right)^{2}
            -\rho\,{1\over\sigma^{2\alpha}}
    }
	{
	\displaystyle
\left({1\over\sigma^\alpha}+1\right)
    \left[
    \left({1\over\sigma^\alpha}+1\right)^{2}-\rho\,{1\over\sigma^\alpha} 
    \right]
	}.
\end{align*}
In the special case $\rho=0$,
\begin{align*}
    R=	
    {\sigma^\alpha\over \sigma^\alpha+1}.
\end{align*}



	\subsection{Quantiles}
	The quantile function of a distribution is useful for the generation of random numbers from the distribution. Since quantiles are invariant with respect to monotonic transformations, from \eqref{stochastic-rep}, the quantile of $W\sim {\rm UF}(\boldsymbol{\theta})$, denoted by $Q_W(p)$, $0 < p < 1$, is written as
	\begin{align}\label{stochastic-rep-main}
	Q_W(p)
	=
	{\sigma Q_Y^{1/\alpha}(p)\over 1+ \sigma Q_Y^{1/\alpha}(p)},
	\end{align}
	where $Y$ has CDF $G(\cdot;\rho)$ given in \eqref{def-G-cdf}.
Note that $Q_Y(p)$ can be obtained by inverting the equation $G(Q_Y(p);\rho)=p$. Consequently, $Q_Y(p)$ is the only zero (in the positive half interval) of the following cubic polynomial:	
\begin{align*}
	(p-1)x^3+[(3-\rho)p-2]x^2+[(3-\rho)p+\rho-1]x+p, \quad x>0.
\end{align*}

\subsection{Maximum likelihood estimation}\label{mle}

Let $\{W_i: i = 1,\ldots, n\}$ be a random sample of size $n$ from $W\sim {\rm UF}(\boldsymbol{\theta})$ and let $w_i$ be the sample observation of $W_i$, $i=1,\ldots,n$. From \eqref{f-expression},
	\begin{align*}
	f_{W_i}(w_i;\boldsymbol{\theta})
	=
	{\alpha\over \sigma^\alpha}\, s^{\alpha-1}_i(s_i+1)^2\,
	g\left({s^\alpha_i\over \sigma^\alpha};\rho\right),
	\quad 
	s_i={w_i\over 1-w_i},
	\quad i=1,\ldots,n,
\end{align*}
where $g(\cdot;\rho)$ is given in \eqref{def-g-cdf}.
Then, the log-likelihood
function for $\boldsymbol{\theta} = (\sigma,\alpha,\rho)^\top$ is given by
\begin{align*}
	\ell(\boldsymbol{\theta})
	=
	n\log(\alpha)-n\alpha\log(\sigma)
	+
	(\alpha-1)
	\sum_{i=1}^{n}
	\log(s_i)
	+
	2
	\sum_{i=1}^{n}
	\log(s_i+1)
	+
	\sum_{i=1}^{n}
	\log\left(g\left({s^\alpha_i\over \sigma^\alpha};\rho\right)\right).
\end{align*}
The likelihood equations are
\begin{align*}
{\partial \ell(\boldsymbol{\theta})\over\partial \sigma}
&=
{n\alpha\over\sigma}
+
{\alpha\over \sigma}
\sum_{i=1}^{n}
{g'(x_i;\rho)\over g(x_i;\rho)}\,
x_i\, \bigg\vert_{x_i={s^\alpha_i\over \sigma^\alpha}},
\\[0,2cm]
{\partial \ell(\boldsymbol{\theta})\over\partial \alpha}
&=
{n\over\alpha}
+
\sum_{i=1}^{n}
\log(s_i)
+
{1\over \alpha}
\sum_{i=1}^{n}
{g'(x_i;\rho)\over g(x_i;\rho)}\,
x_i \log(\sigma^\alpha x_i) \, \bigg\vert_{x_i={s^\alpha_i\over \sigma^\alpha}}
- 
{\log(\sigma)}
\sum_{i=1}^{n}
{g'(x_i;\rho)\over g(x_i;\rho)}\,
x_i\, \bigg\vert_{x_i={s^\alpha_i\over \sigma^\alpha}},
\\[0,2cm]
{\partial \ell(\boldsymbol{\theta})\over\partial \rho}
&=
-
\sum_{i=1}^{n}
{x^4_i+ (\rho-2) x^3_i - 6 x^2_i+ (\rho-2) x_i  +1\over g(x_i;\rho) [(x_i+1)^2-\rho x_i]^3} \, \bigg\vert_{x_i={s^\alpha_i\over \sigma^\alpha}},
\end{align*}
where
\begin{align*}
	g'(x;\rho)
	=
-\dfrac{2[\rho^3 x^3 - \rho(x-2)^2 (x+1)^4 + (x+1)^6 + \rho^2 (1 + 3 x - 5 x^3 - 3 x^4)]}
{(x+1)^3 [(x+1)^2-\rho x]^3}.
\end{align*}
Necessary conditions for the occurrence
of a maximum (or a minimum) are
\begin{align*}
 	{\partial \ell(\boldsymbol{\theta})\over\partial \sigma}=0, \quad 
 	{\partial \ell(\boldsymbol{\theta})\over\partial \alpha}=0, \quad 
 	{\partial \ell(\boldsymbol{\theta})\over\partial \rho}=0.
\end{align*}
When a nontrivial root $\widehat{\boldsymbol{\theta}} = (\widehat{\sigma},\widehat{\alpha},\widehat{\rho})^\top$ of the likelihood equations provides the absolute maximum of the log-likelihood function, it is called an ML estimate in the strict sense.
Any nontrivial root of these equations is called the ML estimate of $\boldsymbol{\theta}$ in the loose
sense.  It is not possible to solve the above likelihood equations analytically, therefore we must resort to numerical optimization
methods.


\section{Simulation Results}\label{sec_simulations}

In this section, we provide a study involving Monte Carlo simulations that analyze the performance of the ML estimator $\hat{\bm \theta}$. 

To evaluate the performance of the ML estimator $\hat{\bm \theta}$, we fix several values of the parameters $\bm\theta=(\sigma$,$\alpha$,$\rho)$, and then we generate $N=1,000$ Monte-Carlo samples with sample size $n\in\{30,50,100\}$, of the random variable $W\sim UF(\sigma, \alpha, \rho)$.
We analyze the estimates $\hat{\bm \theta}$ through relative bias (RB), and root mean squared error (RMSE).

Random samples from the UF distribution can be generated using the inversion method, based on the quantiles \( F_W^{-1}(U) \), where \( U \) is a uniform random variable in \([0, 1]\). The inverse function was computed numerically using the {\it `GoFKernel'} package in the R software.

To obtain the ML estimates for the UF distribution, we used the \textit{goodness.fit} function from the \textit{AdequacyModel} package in the R software.

For the simulation, the following procedure was carried out:
 \begin{enumerate}[(1)]
 \item fix the parameter $\bm\theta$, the sample size $n\in\{30, 50, 100\}$ and the number of Monte Carlo replications $N=1,000$;
 
     \item the estimate $\hat{\boldsymbol{\theta}}^{(j)}$ is computed, for each Monte-Carlo sample:
     $$W_1^{(j)}, \cdots, W_n^{(j)}, ~~j=1,\cdots, N;$$


     \item the RB and RMSE are computed as follows:
     $$RB(\hat{\theta}_k) = \frac{1}{N}\sum_{j=1}^N \left(\frac{\hat{\theta}^{(j)}_k - \theta_k}{\theta_k}\right), ~~k=1,2,3,~\mbox{and}~ \theta_1=\sigma, \theta_2=\alpha, \theta_3=\rho, $$
     and 
     $$RMSE(\hat{\theta}_k) = \frac{1}{N}\sum_{j=1}^N \left(\hat{\theta}^{(j)}_k - \theta_k\right)^2.$$
 \end{enumerate}

The simulation results are presented in Figures \ref{fig:sim_RB} and \ref{fig:sim_RMSE}.
The ML estimator shows good behavior with minimal relative bias and low root mean squared error. Furthermore, it is clear that increasing the sample size $n$ leads RB and RMSE to approach zero.
This demonstrates the precision of the ML estimator for the UF distribution, even with small sample sizes. Additionally, we observe that the RMSE of $\rho$ estimates is less affected by variations in sample size.

\begin{figure}[htb!]
	\centering
	\includegraphics[width=1.0\linewidth]{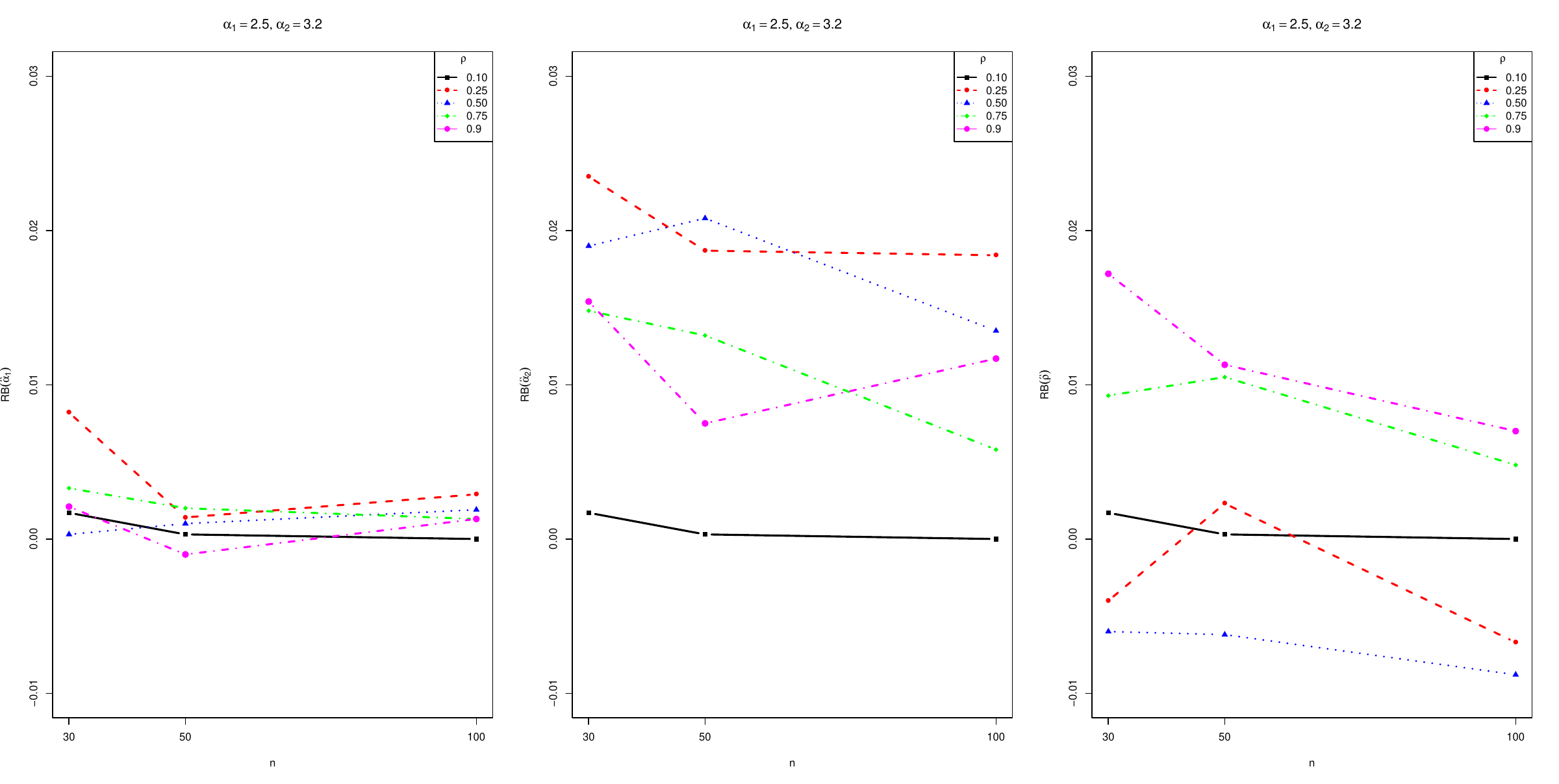}
	\caption{Relative bias for the ML estimates of $\alpha_1$ (left), $\alpha_2$ (middle) and $\rho$ (right).}
	\label{fig:sim_RB}
\end{figure}

\begin{figure}[htb!]
	\centering
	\includegraphics[width=1.0\linewidth]{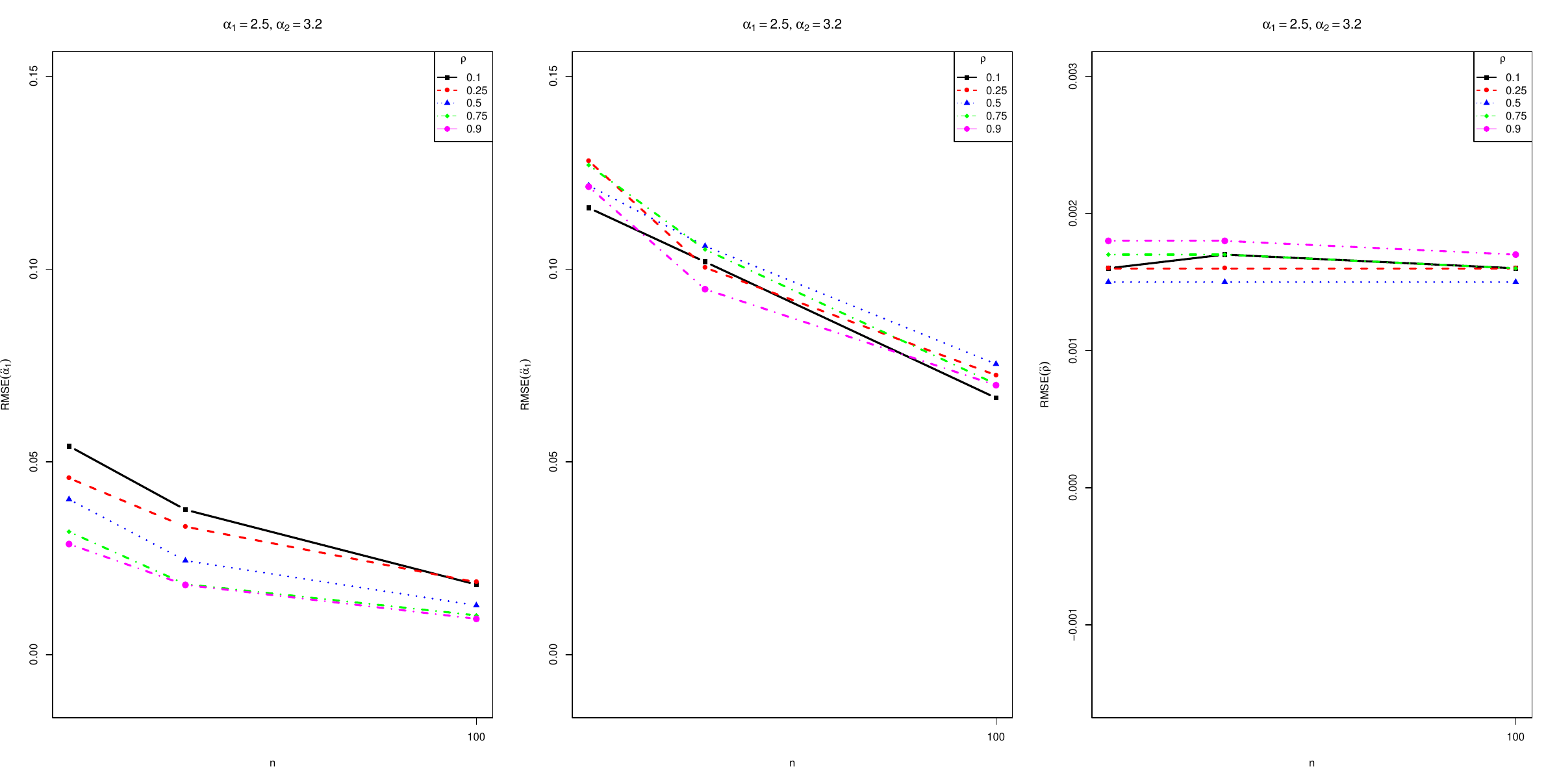}
	\caption{Root mean squared error for the ML estimates of $\alpha_1$ (left), $\alpha_2$ (middle) and $\rho$ (right).}
	\label{fig:sim_RMSE}
\end{figure}

\section{Applications}\label{sec_applications}
We now give two applications using real data previously analyzed in the literature. We study the validity of the unit-Fréchet distribution for the data sets and show that the new model presents good fits in both cases.

\subsection{Medium pass completion proportion}

Statistical tools are essential for modeling sports data sets. The better the model fits, the better the decisions can be made. 
In this context, the unit-Fréchet distribution is well-suited for modeling football proportion data. 
For this, we used the data which was modeled before in \cite{Vila24bivariate} and \cite{Quintino24estimation}.
The data\footnote{Available at \url{https://www.kaggle.com/} (accessed on 13 February 2024)} consists of the medium pass completion proportion in UEFA Champions League for the seasons 2004/05 and 2005/06, that is, the relative frequency of successful passes between 14 and 18m - thus a number from 0 (none of the passes) to 1 (all the passes).
For the convenience of the reader, the data are presented below:
\begin{eqnarray}
    \nonumber
    &&(0.289, 0.700, 0.211, 0.733,0.444, 0.544, 0.089, 0.767, 0.433, 0.911,0.800, 0.733, 0.278, 0.456, 0.178,\\
   \nonumber &&
     0.200, 0.244, 0.467, 0.022, 0.400, 0.378,  0.589, 0.600, 0.567, 
 0.844, 0.711, 0.289, 0.178, 0.489, 0.278,\\
 \nonumber  &&    0.611, 0.544,0.267, 0.489, 0.467, 0.300, 0.311)
\end{eqnarray}

Descriptive statistics are presented in Table \ref{tab:fut_descritive}. As the data set has unit support, the unit-Fréchet is a natural candidate to model such data. 
We compared the fitted model with the well-known Beta and Kumaraswamy distributions. Additionally, we evaluated the results against the p-max stable law $H_2$ distribution, as previously utilized in \cite{Quintino24estimation}. 
To obtain the estimation results for Beta and Kumaraswamy models in Table \ref{tab:fut_estimation}, we used the package \textit{AdequacyModel} of the R software and we used the function `\textit{goodness.fit}' to obtain maximum likelihood estimates.
After estimating the parameters for the four distributions, the information criteria AIC (Akaike information criterion) and BIC (Bayesian information criterion) were applied, justifying the choice of the unit-Fréchet (see Table \ref{tab:fut_estimation}).
This choice was also supported by the Kolmogorov-Smirnov test, whose p-value was $0.8837$, indicating that we could not reject the null hypothesis that the true model is the unit-Fréchet.
Figure \ref{fig:hist_ecdf_fut} shows the fit of distributions to the data set.

\begin{table}[!ht]
	\centering
	\caption{Descriptive statistics for income-consumption data.}
	\label{tab:fut_descritive}
		\begin{tabular}{cccccccccc}
			\hline
			n & Min. & 1st Qu. & Median & Mean & 3rd Qu. & Max. & Std. dv. & CS & CK\\ 
			\hline
			37& 0.02 & 0.28 & 0.46 & 0.45 & 0.60 & 0.91 & 0.22 & 0.16 & -0.92  \\ 
			\hline
	\end{tabular}
\end{table}

	\begin{table}[!ht]
\caption{Estimated parameters and information criteria for model selection.}
\label{tab:fut_estimation}
\centering
\begin{tabular}{ccccc}
  \hline
Model & $\hat{\bm\theta}$ & $\ell(\hat{\boldsymbol{\theta}})$ & AIC & BIC  \\ 
  \hline
\bf UF & (0.8064, 1.0590, 0.8235)  & 4.9208 & \bf -3.8417 & \bf 0.9911 \\ 
Beta & (1.5994, 1.8739) & 4.7113 & -3.4226 & 1.4102\\
Kumaraswamy & (1.5721, 1.9757) &  4.7834 & -3.5668 & 1.2660\\ 
$H_2$ & (1.8231, 0.7822, 0.8249) & 1.1325&  3.7351 &  8.5678\\
   \hline
\end{tabular}
\end{table}

\begin{figure}[H]

    \includegraphics[width=0.9\linewidth]{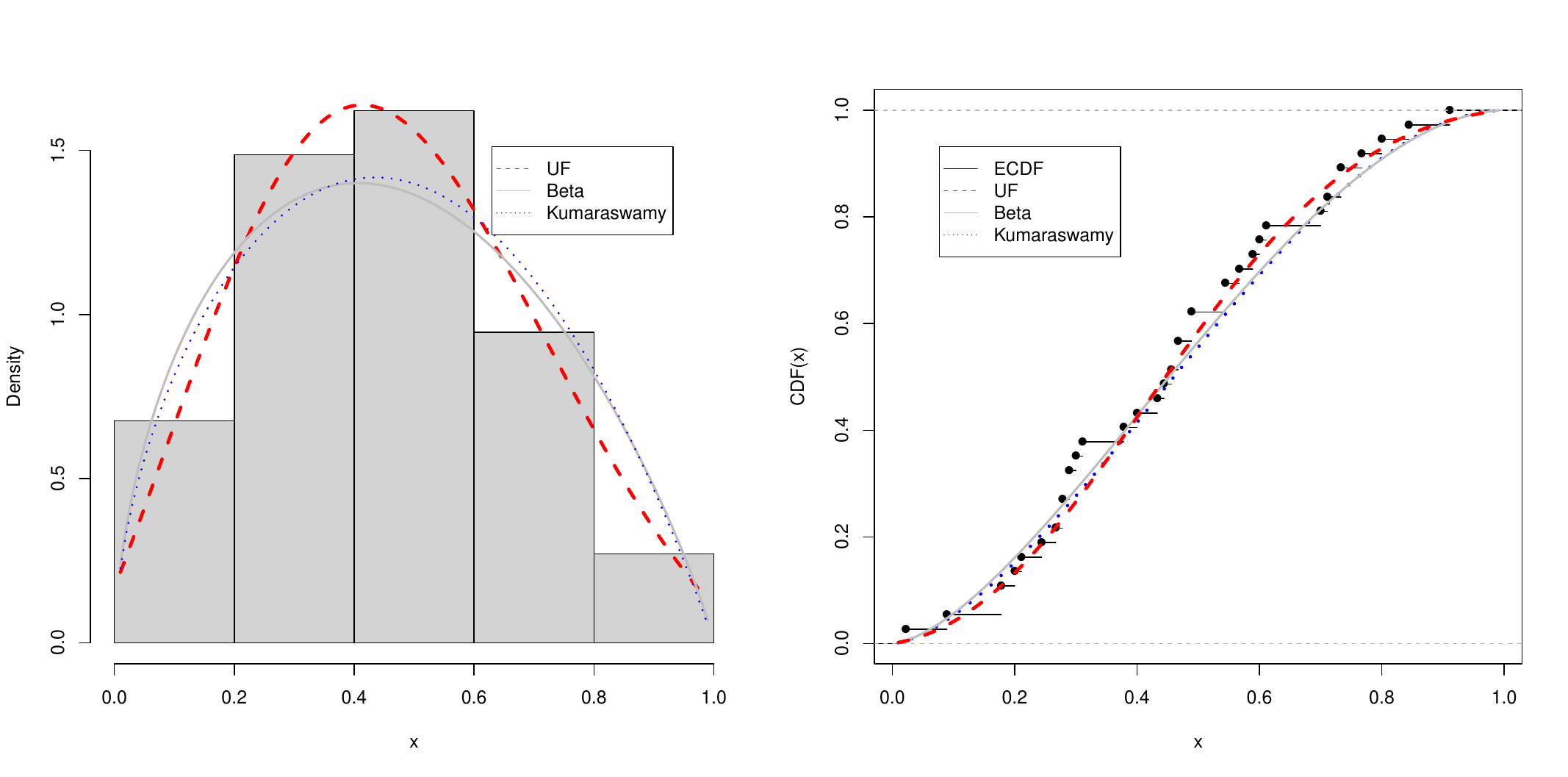}
    \caption{Plots for medium pass completion proportion. On left, histogram and fitted PDFs; on right, empirical CDF and fitted CDFs.} 
    \label{fig:hist_ecdf_fut}
\end{figure}

We use a residuals analysis, to ensure a good model fit. The residuals are calculated using the formula 
$R_i=\hat{F}(w_i) - F(w_i; \hat{\bm \theta})$, where $\hat{F}$ denotes the empirical CDF (ECDF), $\hat{\bm \theta}$ is the estimated parameter, $F(w_i; \hat{\bm \theta})$ is the CDF of the fitted model (unit-Fréchet, Beta or Kumaraswamy) to each observation \(w_i\). 
The Quantile-Quantile (QQ) plots and boxplot of residuals (Figure \ref{fig:uefa_qq}) suggest that the unit-Fréchet model performs as well as the Beta or Kumaraswamy distributions. However, as previously highlighted, analyzing the histogram, ECDF plots, and the corresponding information criteria suggests that the unit-Fréchet model is recommended for a better data fit.

\begin{figure}[htb!]
    \centering
\includegraphics[width=0.9\linewidth]{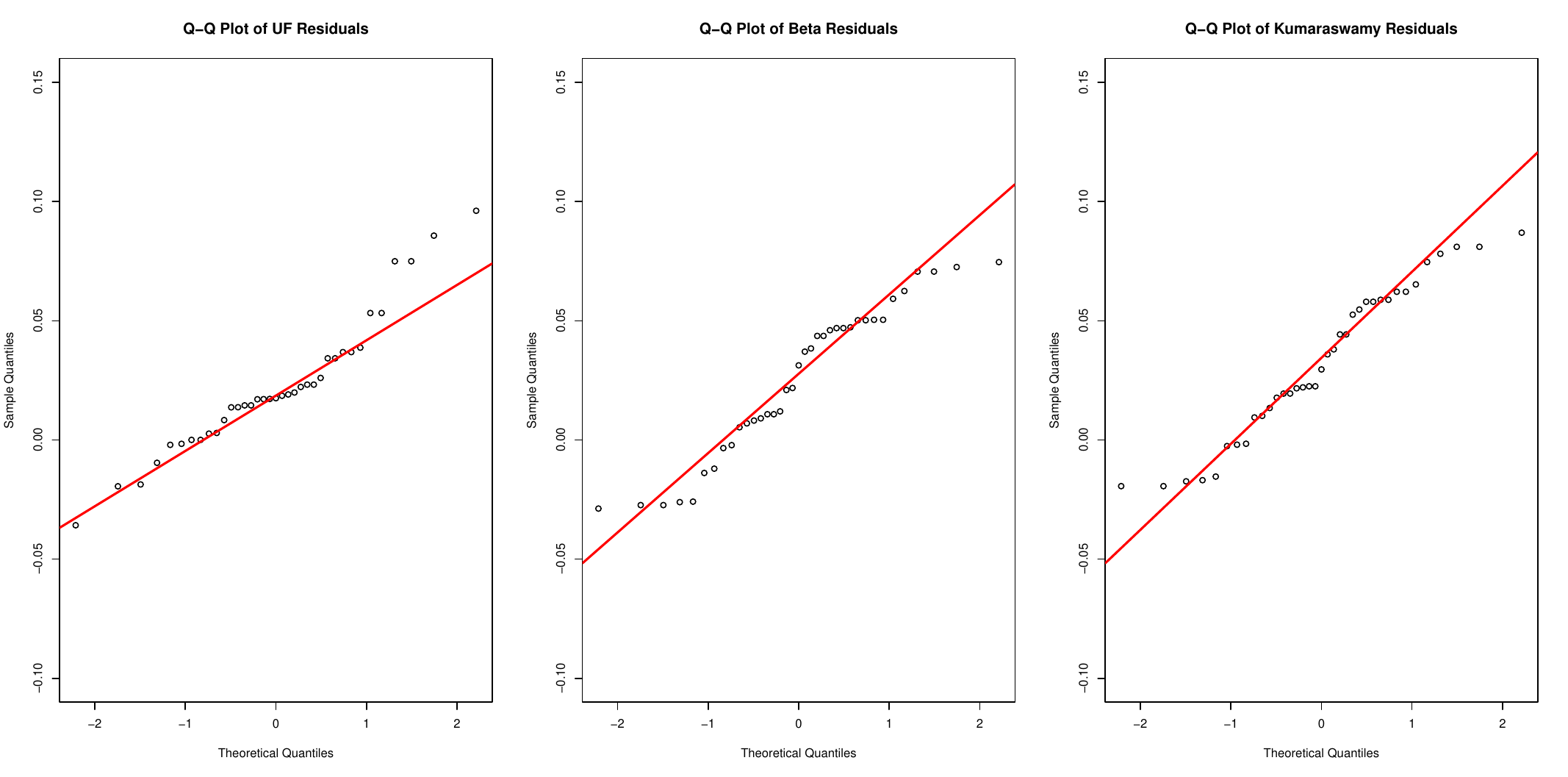}
\includegraphics[width=0.9\linewidth]{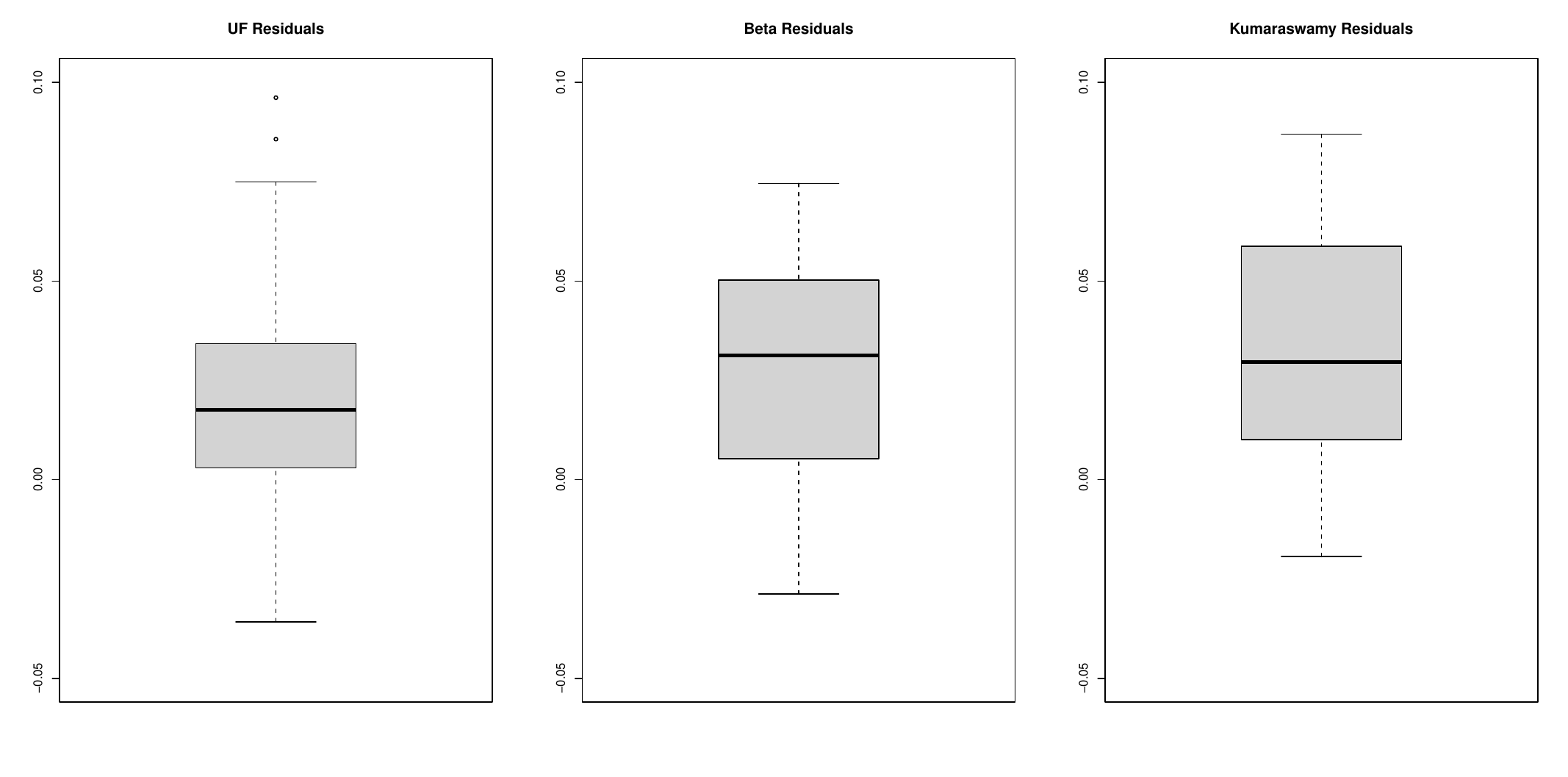}
    \caption{Quantile-Quantile plot (on top) and boxplot (on bottom) displaying residuals from fitted models.}
    \label{fig:uefa_qq}
\end{figure}

\subsection{Income-consumption data}

Recently, \cite{Vila24unitbs} introduced the unit bimodal Birnbaum-Saunders (UBBS) distribution and showed its effectiveness in modeling income and consumption data. Their findings highlighted that the UBBS provided a good fit for such data sets. Here, we show that the unit Fréchet distribution offers an improved fit for modeling these data.

The data \footnote{Available at \url{https://www.bancaditalia.it/statistiche/tematiche/indagini-famiglie-imprese/bilanci-famiglie/documentazione/ricerca/ricerca.html?min_anno_pubblicazione=2008&max_anno_pubblicazione=2008}, (accessed on July 22, 2024)} comes from the Bank of Italy's Survey on Household Income and Wealth from the year 2008. 
We utilized data from the RFAM08 data set (with income as the $Y$ variable) and the RISFAM08 data set (with expenditures as the $C$ variable), as outlined in the Survey on Household Income and Wealth 2008 data description file. We excluded any data entries where income or consumption was negative or unavailable. The final data set analyzed included information from 7,957 families, after removing 20 entries.

These data set was previously modeled in \cite{domma2012stress} and \cite{Lima24}, however, they compared income and consumption through stress-strength measures $\mathbb{P}(X<Y)$.
In this study, we analyze household financial fragility by constructing a new variable $U=Y/(X+Y)$.
In summary, when $X$ and $Y$ represent expenditures and income, respectively,  $U < 1/2$ indicates that the family ended the year with more expenditures than income. If $U > 1/2$, the opposite is true. The case $U = 1/2$ means expenditures and income are equal.

We can highlight that the ratio of the Fréchet random variables model proposed by \cite{NadarajahKotz06} is unsuitable for modeling this data because the random variables $X$ and $Y$ are correlated (Pearson's correlation is 0.717). Conversely, our ratio distribution \eqref{pdf-main} is applicable, as discussed below.

Table \ref{tab:descritiva_receitas_despesas} presents descriptive statistics for $U$. In addition, we note that 81.5\% of the families had $U>1/2$. That means, in general, the families have income greater than expenditures.
Figure \ref{fig:ic_hist_ecdf} shows the fit of the unit-Fréchet, Beta, Kumaraswamy, and UBBS distributions to the density and ECDF of the data. 
The unit-Fréchet model provides the better fit, according to the ECDF plot which agreed with the information criteria (Table \ref{tab:ic_AIC_BIC}).
Analyzing Figure \ref{fig:ic_qqplots}, we conclude that, in general, the residuals of the unit-Fréchet model are more well-behaved than residuals of the other theoretical distributions.

\begin{table}[!ht]
	\centering
	\caption{Descriptive statistics for income-consumption data.}
	\label{tab:descritiva_receitas_despesas}
		\begin{tabular}{crrrrrrrrr}
			\hline
			n & Min. & 1st Qu. & Median & Mean & 3rd Qu. & Max. & Std. dv. & CS & CK\\ 
			\hline
			7,957 &  0.004
			&  0.514
			& 0.553
			&  0.557
			&  0.608
			& 0.935
			& 0.086
			& -0.419
			& 3.343\\
			\hline
	\end{tabular}
\end{table}

\begin{table}[!ht]
	\caption{Estimated parameters for income-consumption data and model selection with AIC and BIC.}
	\label{tab:ic_AIC_BIC}
	\centering
	\begin{tabular}{ccccc}
		\hline
		Model & $\hat{\bm\theta}$ & $ll_{max}$ & AIC & BIC  \\ 
		\hline
        UF &(1.256, 3.779, 0.714)& 8543.067& \bf -17080.13&\bf -17059.19\\
		Beta & (7.492, 6.390) &  6877.40 & -13750.79 & -13736.83 \\ 
        Kumaraswamy & (4.613, 9.608)& 6978.31& -13950.62& -13929.67\\
        UBBS & (0.275, 0.274, 1.041, 1.331, 0.149) & 7062.74 & -14135.49 &  -14170.40 \\ 
		\hline
	\end{tabular}
\end{table}

\begin{figure}[!ht]
	\centering
	\includegraphics[width=1.0\linewidth]{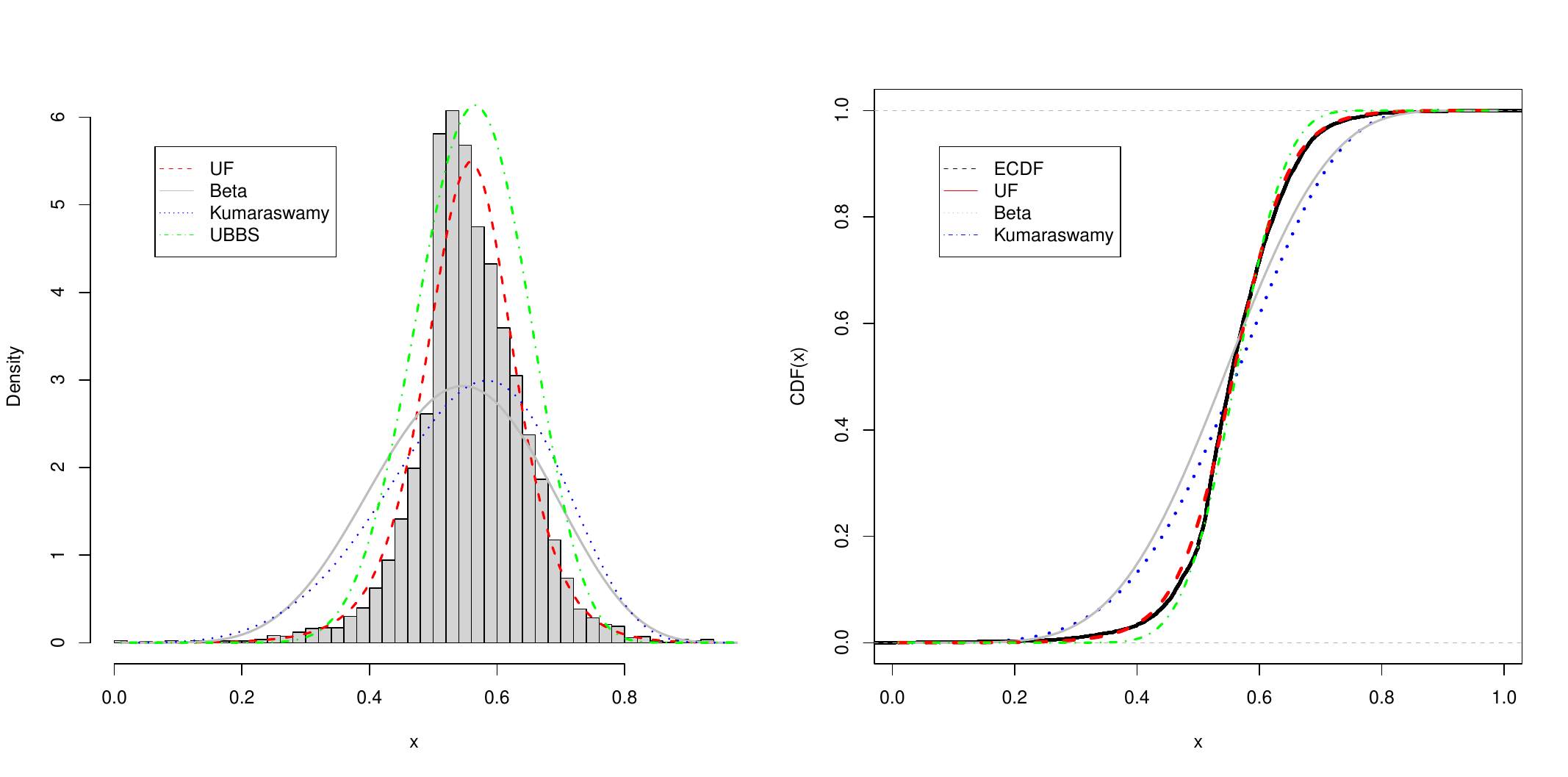}
	\caption{Fitted PDFs (left) and empirical CDFs (right) for income-consumption data.}
	\label{fig:ic_hist_ecdf}
\end{figure}

\begin{figure}[!ht]
	    \centering
	    \includegraphics[width=1.0\linewidth]{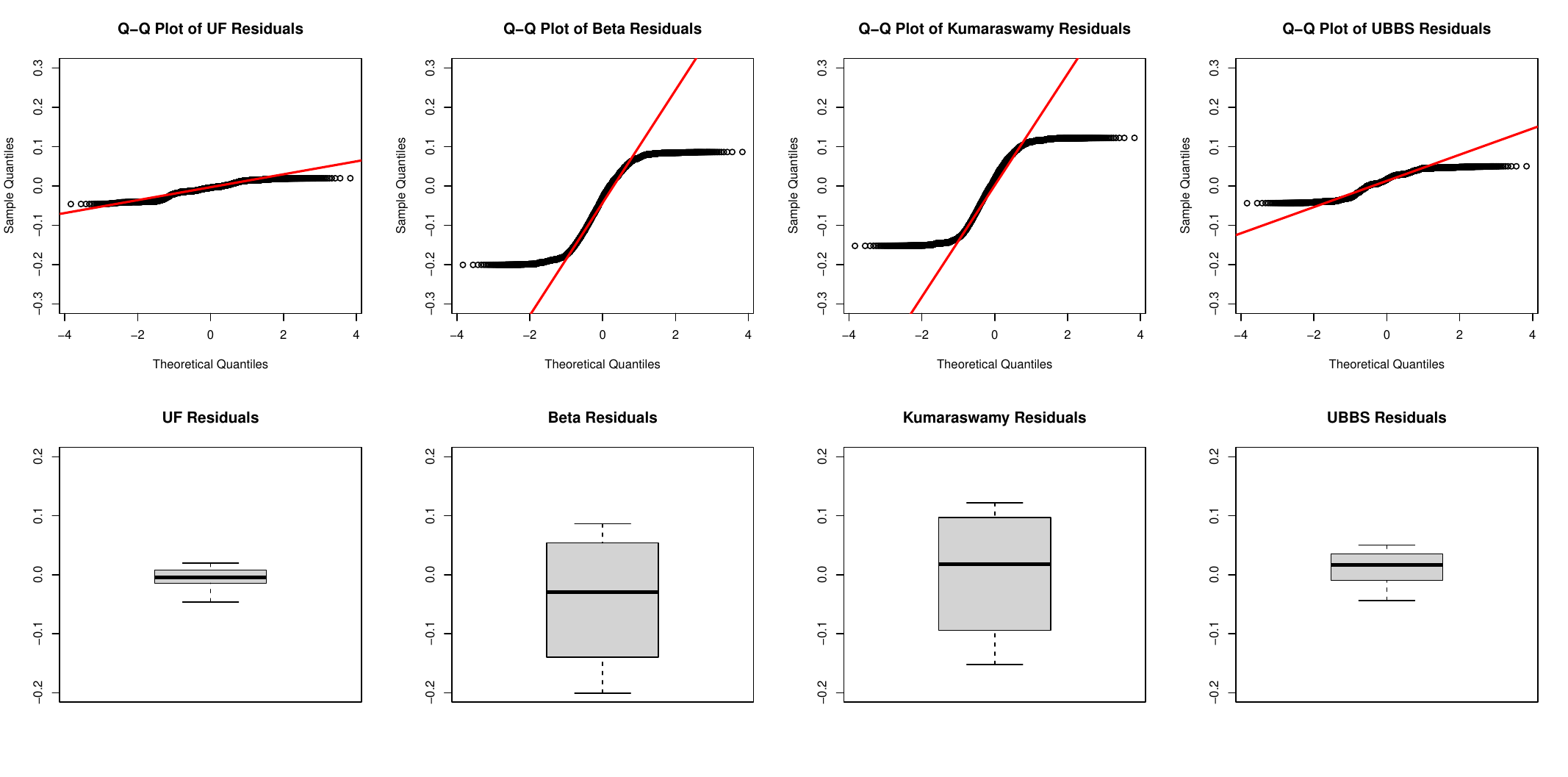}
	    \caption{Quantile-Quantile plot (on top) and boxplot (on bottom) displaying residuals from fitted models for income-consumption data.}
	    \label{fig:ic_qqplots}
	\end{figure}

\newpage

\section{Conclusions}
This paper introduces a novel unit distribution, exploring its generation through ratios of dependent Fréchet variables. Theoretical aspects, including identifiability, symmetry, stochastic representation, characterization, moments, stress-strength probability, quantiles, and maximum likelihood estimation, are rigorously examined. Practical applications are demonstrated via Monte Carlo simulations, football data analysis, and income-consumption modeling.

Potential extensions include analyzing modality, entropy, and shape characteristics (skewness, kurtosis) using moment approximations in Subsection \ref{Approximations for the moments}. Future research can investigate alternative construction principles for unit distributions using quantiles of positive-support random variables, potentially yielding new distributional families.

%
	\paragraph*{Data availability statement}
In this manuscript the main results are theoretical.
Data availability is not applicable to this article as no new data were created or analyzed in this study. 
	\paragraph*{Acknowledgements}
The research was supported in part by CNPq and CAPES grants from the Brazilian government.
	
	\paragraph*{Disclosure statement}
	There are no conflicts of interest to disclose.

\end{document}